\let\doendproof\endproof
\renewcommand\endproof{~\hfill\qed\doendproof}
\newcommand{\nbart}{NBART\xspace}
\newcommand{\mblock}{\textsc{Block}}
\newcommand{\mhash}{\textsc{Summary}}
\newcommand{\mreport}{\textsc{Report}}
\newcommand{\prods}{{\cal P}}
\newcommand{\cons}{{\cal C}}
\newcommand{\byz}{{\cal F}}
\newcommand{\byzp}{{\cal F}_\prods}
\newcommand{\byzc}{{\cal F}_\cons}
\newcommand{\nbyzp}{F_\prods}
\newcommand{\nbyzc}{F_\cons}
\newcommand{\nprods}{N_\prods}
\newcommand{\ncons}{N_\cons}
\newcommand{\nblocks}{B}
\newcommand{\mprodset}{\mbox{\emph{prodset}}}
\newcommand{\mconset}{\mbox{\emph{conset}}}
\newcommand{\vsigma}{\vec{\sigma}}
\newcommand{\expbenef}{\bar{\beta}}
\newcommand{\exputil}{\bar{u}}
\newcommand{\expcost}{\bar{\alpha}}
\newcommand{\pbenef}{\beta_\prods}
\newcommand{\cbenef}{\beta_\cons}
\newcommand{\playerset}{{\cal M}}
\newcommand{\idset}{{\cal I}}
\newcommand{\collset}{{\cal T}}
\newcommand{\nconids}{N_\collset}
\newcommand{\nconidsp}{N_\collset^\prods}
\newcommand{\nconidsc}{N_\collset^\cons}
\begin{document}
\title{Asynchrony and Collusion in the \\ N-party BAR Transfer Problem}

\author{
Xavier Vila\c{c}a \and
Oksana Denysyuk \and
Lu\'{\i}s Rodrigues
}

\institute{INESC-ID, Instituto Superior T\'{e}cnico, Universidade T\'{e}cnica de Lisboa}

\date{\today}

\maketitle

\begin{abstract}
  The problem of reliably transferring data from a set of $\nprods$
  producers to a set of $\ncons$ consumers in the BAR model, named
  N-party BAR Transfer (\nbart), is an important building block for
  volunteer computing systems.  An algorithm to solve this problem in
  synchronous systems, which provides a Nash equilibrium, has been
  presented in previous work.  In this paper, we propose an \nbart
  algorithm for asynchronous systems. Furthermore, we also address the
  possibility of collusion among the Rational processes. Our game
  theoretic analysis shows that the proposed algorithm tolerates
  certain degree of arbitrary collusion, while still fulfilling the
  \nbart properties.
 \end{abstract}

\section{Introduction}
Peer-to-peer networks can be used for executing
computationally intensive projects, as shown by the Boinc
infrastructure\,\cite{boinc}. Building systems on this kind
of networks may be quite challenging due to the existence of Byzantine processes, whose behaviour
is arbitrary, and of Rational processes, which may deviate from the specified
protocols if they can increase their utility. A system model that captures
this variety of behaviours has been coined the BAR model\,\cite{barb},
named after the three 
classes of processes (Byzantine, Altruistic, and Rational) that it explicitly considers.

Our work focuses on the particular problem of reliably transferring data from
a set of $\nprods$ producers to a set of $\ncons$ consumers in the BAR model, named
N-party BAR Transfer (\nbart). This problem is
an important building block for volunteer computing systems, since it allows
volunteers to transfer intermediate or final results of the computations to another
set of volunteers, after storing the data for some time. For instance, 
if computations are to be performed using a model such as MapReduce, mappers may invoke
the \nbart primitive to transfer the intermediate results to reducers.

Although an algorithm that solves  this problem has already been devised for  
synchronous systems\,\cite{Vilaca:11}, in a peer-to-peer network
it is often unrealistic to assume that there is a known upper bound
for the execution time and the communication delay. With this in mind, 
this paper addresses the \nbart problem in an asynchronous system. 

Furthermore, this paper also addresses the problem of collusion, which
is a real issue in peer-to-peer networks due to attacks, such as sybil
and white washing. In addition to arbitrary collusion of Byzantine
players, we consider that Rational processes may create collusion groups,
including producers and consumers.

\subsubsection{Related Work}
\label{sec:relatedwork}

Since models based on traditional Game Theory assume that all
processes follow the selfish strategy that maximises their utility
function, they fail to account for arbitrary behaviour that may arise
from Byzantine faults. In face of this limitation, traditional utility
functions must be augmented to accommodate
Byzantine-awareness. Additionally, alternative rules for predicting how the game
will be played have also been proposed to address Byzantine behaviour.

To the best of our knowledge, the work of Eliaz et. al.~\cite{eliaz}
was the first to address the issues above, 
introducing the notion of $k$-Fault Tolerant Nash Equilibrium ($k$-FTNE). In this context,
a profile of strategies is $k$-FTNE if the strategy of each player is a best response to 
the strategy of other players, independently of the identity of Byzantine players
and the arbitrary strategy they follow. This concept was later applied to
virus inoculation games\,\cite{moscibroda}. In~\cite{Wong:11}, the authors 
discuss the limitations imposed by \emph{regret freedom}
on communication games, by proving that there are no non-trivial equilibria
that provide regret-freedom strategies. Then, they propose a different approach
named \emph{regret-braving} where players are willing to obey the specified solutions
basing on their expectations about the environment, and these strategies are
regret-free as long as those expectations hold. In our work, we consider
that players are risk-averse, that is, they always hold the expectation that Byzantine
players will follow the worst possible strategy to their utility.

In practice, rational players can seek maximising their utility function by colluding 
with other players, i.e., forming coalitions. Therefore, the solution concepts are more 
robust if they account for such rational behaviour. Aumann~\cite{Aumann:59} 
addressed this issue by defining an equilibrium as a profile
of strategies where no deviating collusion strategy provides
a greater utility for all players of the group. Then, Bernheim et. al.\,\cite{Bernheim:87}
introduced the notion of \emph{coalition-proof Nash equilibrium}, where no 
deviations by a coalition can perform better, although they do not allow further deviations to the collusion strategy. 
This work was later extended to take into consideration correlated strategies\,\cite{Moreno:96}.

The work of~\cite{abraham} considered the existence of processes with
unexpected utilities and collusion. The authors proposed the solution
concept of $(k,t)$-robustness, where no process can increase its
utility by deviating in collusion with up to $k-1$ other processes,
regardless of the Byzantine behaviour of up to $t$ processes. This
notion is stronger than the previous models for collusion, since it
accounts for arbitrary collusion where it should be true that no
player performs better by deviating from the equilibrium strategy,
even if that implies decreasing the utility of other players within
the coalition. Unfortunately, in certain scenarios such as
communication games (where players incur communication costs), it was
shown that no game can be $(k,t)$-robust for
$k,t>0$\,\cite{bar-theory}.

Additional literature relevant to our results include works
on agreement in the BAR model~\cite{barb,bar-theory} and data 
dissemination~\cite{bargossip,flightpath,firespam}, which studied protocols
tolerant to the BAR model and showed in which conditions those solutions
provide Nash equilibriums. In~\cite{bar-altruism}, the authors studied the impact
of altruism on a repeated game modelled by the BAR model. All these works 
assume repeated interactions of processes 
in a cooperative service. On the other hand, our paper considers one-shot games, 
and therefore addresses the need to provide equilibrium strategies
for Rational processes to follow the specified algorithm based on incentives provided
in a single instance of \nbart.

\subsubsection{Contributions}
The first contribution of this paper consists in an algorithm that solves \nbart in 
asynchronous systems. We show that the proposed algorithm is correct,
assuming that all non-Byzantine processes follow it, for  
$\nprods \geq 2\nbyzp+1$ and $\ncons \geq \nbyzc +1$, where $\nbyzp$ 
and $\nbyzc $ are upper bounds on the number of Byzantine producers and 
Byzantine consumers respectively. We also show that the presented algorithm 
obtains asymptotically optimal bit complexity in certain scenarios. 

The second contribution consists in the game theoretic analysis of the proposed algorithm. 
Since processes incur communication costs, our algorithm cannot be $(k,t)$-robust\,\cite{bar-theory}, hence
we rely on a weaker notion of Byzantine aware utility function to account for Byzantine behaviour,
based on the notion proposed in~\cite{bar-theory}. 

Given that we cannot ensure that the players within 
a coalition follow the algorithm, we propose a new solution concept,
which is an adaptation of $k$-resilience to account for collusion 
in the following way. We define an equilibrium as a profile of strategies $\vsigma$
where members of a coalition are interested in deviating from $\vsigma$ only 
if their behaviour, as observed by other processes,
is equivalent to $\vsigma$.

We assume that the size of each group of Rational colluding processes
is bounded by a constant $\nconids = \nconidsp + \nconidsc$, where $\nconidsp$ 
is the number of members of the colluding group that are producers and
$\nconidsc$ is the number of consumers on the same group. We show that, if $\nprods \geq \max(\nbyzp,\nconidsp) + \nbyzp +1$ and 
$\ncons \geq \nbyzc + \nconidsc + 1$, then the algorithm provides such equilibrium, implying
that processes from any coalition follow a strategy that ensures that the \nbart properties
are fulfilled. An important consequence of this is that, in the absence of collusion,
the algorithm provides a Nash equilibrium.

\subsubsection{Paper Organisation}
The remainder of the paper is structured as follows. The system model and the \nbart problem are defined in Section~\ref{sec:model}. The
algorithm that solves NBART in the given model is presented in Section~\ref{sec:solution}, 
along with the proofs of correctness and a simple complexity analysis. 
In Section~\ref{sec:gametheory}, we perform the game theoretic
analysis of the algorithm.

\section{System Model}
\label{sec:model}

We assume an asynchronous system composed of $N$ \emph{processes} or
\emph{players} (we will use the term \emph{player} only when
performing the Game Theoretic analysis; in any other case, we will use
the name process). Processes are connected by a fully-connected
network and can communicate using reliable authenticated
point-to-point communication channels\,\cite{Cachin:978-3-642-15259-7}.

We make the distinction between \emph{identity},
\emph{process/player}, and \emph{coalition}.  An identity is a tuple
$(i,pk_i,sk_i)$, where $i$ is an identifier and $pk_i$ and $sk_i$ are
the corresponding public and private keys.  There is a set of
identities $\idset = \prods \cup \cons$, where $\prods$ and $\cons$
are the sets of producer and consumer identities, respectively, such
that $\#\prods=\nprods$ and $\#\cons = \ncons$.
Players are the decision-making entities of our Game Theoretic
analysis and are represented by a single identity. Therefore, when
referring to the process that holds the identity $(i,pk_i,sk_i)$, we
will simply refer to it as $i$. If $i \in \prods$, the corresponding
process is referred to as a producer, otherwise, it is called a
consumer. Finally, $\nprods+\ncons=N$.

As defined by the BAR model, a player can be Altruistic (if
it follows the algorithm), Byzantine (if its behaviour is arbitrary), or
Rational (if it follows the strategy that maximises its utility given
the expectations regarding the strategies followed by other players).
We assume that Rational processes adhere to the \emph{promptness
  principle}\,\cite{barb}, in the sense that if the expected utilities
of following the algorithm and deviating by delaying messages are
equivalent, then processes do not deviate.  It is said that a player
$i$ signs information with $sk_i$ by invoking $s_i(\mbox{\emph{data}})$.

\subsection{\nbart Problem}

The \nbart Problem can be defined as follows.  Each producer $p$
produces an arbitrarily large value $v_p$ by invoking the
deterministic function \emph{produce$(p$, $v_p)$}, such that any two
non-Byzantine producers produce the same value, named the
\emph{correct value}. Consumers must consume only one value $v$, sent
by some producer, by invoking \emph{consume$(c,v)$}.  The invocation
of this primitive proves that, indeed, $c$ consumes the value.  To
deal with Rational behaviour, we rely on the participation of an
abstract entity named Trusted Observer (TO), whose function is to
gather cryptographic information from the participants of each
transfer and reward processes according to their observable
behaviour. To assess the behaviour of each process, TO uses two
predicates \textit{hasProd}(evidence, $p$) and
\textit{hasAck}(evidence, $c$) that take as input the evidence
produced by \textit{TO} to indicate, respectively, if producer $p$
participated in \nbart and if consumer $c$ notified the reception of
the correct value.  \textit{TO} is said to eventually \emph{produce
  evidence} about the transfer if, when \emph{hasProd} and
\emph{hasAck} become true for all corresponding non-Byzantine
producers and consumers, TO eventually calls the primitive
\emph{certify}$($\textit{TO}, \textit{evidence}$)$ after that. With
these definitions, the \nbart problem is characterised by the
following properties:

\begin{itemize}
\item \textbf{\nbart 1} (\emph{Validity}): If a non-Byzantine consumer
  consumes $v$, then $v$ was produced by some non-Byzantine producer.

\item \textbf{\nbart 2} (\emph{Integrity}): No non-Byzantine consumer consumes
  more than once.

\item \textbf{\nbart 3} (\emph{Agreement}): No two non-Byzantine consumers
  consume different values.

\item \textbf{\nbart 4} (\emph{Eventual Consumption}): Eventually, every non-Byzantine
  consumer consumes a value. 
  
\item \textbf{\nbart 5} (\emph{Evidence}): TO eventually produces evidence about the transfer. 

\item \textbf{\nbart 6} (\emph{Producer Certification}):
  If producer $p$ is non-Byzantine, then \textit{hasProd(evidence, $p$)} eventually becomes \textit{true}.

\item \textbf{\nbart 7} (\emph{Consumer Certification}):
  If consumer $c$ is non-Byzantine, then \textit{hasAck(evidence, $c$)} eventually becomes \textit{true}.
\end{itemize}

\section{Asynchronous \nbart}
\label{sec:solution}

We now describe an algorithm that solves the \nbart problem in an asynchronous environment. We first
provide an overview, then proceed to the detailed description of the algorithm,
and we conclude with a theoretical analysis, where we prove the correctness of this solution
and perform a complexity analysis in terms of message and bit complexity.

\subsection{Overview of the Algorithm}
The algorithm can be briefly described as follows. Each producer $p$ owns a 
block ($b_p$) that belongs to the set of $\nprods$ blocks
obtained from the value $v$ by using Reed-Solomon codes, such that $v$ can be retrieved from any subset 
of $\nblocks$ blocks ($\nprods \geq \nblocks + \nbyzp$). Then, $p$ strives to transfer
$b_p$ along with the signature of the vector that contains the hashes of all blocks to a subset of consumers denoted by $\mconset_p$. 
Each consumer $c$ only needs to receive $\nblocks$ correct blocks and $\nbyzp+1$ signatures of the same vector
of hashes to consume the value. 
However, $c$ must continue to process any received information 
and send it to TO, which must (re-)invoke \emph{certify$($evidence$)$} 
whenever it receives new information, in order to fulfil the property \nbart-5.

\subsection{Algorithm in Depth}
The algorithm is depicted for producers in~Alg.~\ref{alg:producer}, for 
consumers in Alg.~\ref{alg:consumer1} and Alg.~\ref{alg:consumer2}, and for TO 
in Alg.~\ref{alg:to}. Producers use Reed-Solomon codes
to reduce the communication costs of transferring an arbitrarily large value.
The value $v$, whose length in bits is denoted by $l_v$, 
is split into $\nprods$ blocks  of size $\frac{l_v}{\nblocks}$,
such that any subset of $\nblocks$ blocks is sufficient to retrieve
the original value, where $1 \leq \nblocks \leq \nprods - \nbyzp$ and $\nblocks < l_v$.
There is a function $\emph{RS-ENC}(v,\nprods,\nblocks,\omega)$ that, given the correct value $v$, the number of producers $\nprods$,
the number of blocks $\nblocks$, and the word size $\omega$, returns a vector $\vec{v}$ containing
the $\nprods$ blocks, where $2^\omega > \nprods$. Let $\vec{h}_v$ denote the vector
containing the hashes of each of the blocks from $\vec{v}$. The inverse function
\emph{RS-DEC}$(\vec{v}',\nprods,\nblocks,\omega,\vec{h}_v)$ is defined as follows:
if there are at least $\nblocks$ blocks from $\vec{v}'$ whose hash is in $\vec{h}_v$, then it returns the
value $v$; otherwise, it returns $\bot$. We consider that all arithmetic 
operations are performed over elements of the Galois Field~$GF(2^\omega)$.

We consider that each process is unequivocally identified
by an index, between $0$ and $\nprods-1$ for producers, and between $0$ and $\ncons - 1$ for consumers.
Each consumer $c_j$ uses a deterministic function $\mprodset_{c_j}$ to determine the set 
of producers that are supposed to send it their blocks, defined in such a way that each consumer is related to 
exactly $\nblocks + \nbyzp$ producers (in this way distributing load among producers).
A possible mapping function is the following:
$\mprodset_{c_j} = \{ p_i \in \prods | i \in [k...(k+\nblocks + \nbyzp - 1)\ mod\ \nprods], k= j(\nblocks+\nbyzp)\ mod\ \nprods\}$.
It is useful to define the function that establishes the inverse relation $\mconset_{p_i} = \{ c_j \in \cons | p_i \in \mprodset_{c_j}\}$ for each producer $p_i$.
These definitions ensure that each consumer is able to receive at least $\nblocks$ blocks from non-Byzantine
producers, therefore being able to retrieve the correct value. In addition, the load
is distributed across the producers such that $\forall_{p \in \prods}:
\#\mconset_p = n \Rightarrow \forall_{p' \in \prods \setminus \{p\}}: n-1 \leq \#\mconset_{p'} \leq n+1$.

Each producer $p$ starts by storing the set of blocks from $\vec{v}$
by invoking \emph{RS-ENC}. Note that each producer will only be required to
transmit one of these blocks (each producer transmits a different
block). However, each producer is still required to send
$\vec{h}_v$. Therefore, each producer then sets the vector
\emph{hashes} to $\vec{h}_v$ (Alg.~\ref{alg:producer}, lines
4-7). Then, $p$ transfers its block along with $\vec{h}_v$ to all
consumers of $\mconset_p$ in a $\mblock$ message (lines 8-10), while
sending $\mhash$ messages to the remaining consumers only containing
$\vec{h}_v$ (lines 11-13). Both these messages are signed with the
public key of the producer. Notice that, in the $\mblock$ message,
it is not necessary to sign the block, for the signature of the hashes
already authenticates the block.

\begin{algorithm}[h]
\label{alg:producer}
\caption{\nbart ($p \in \prods$)}
{
\scriptsize
\begin{tabbing}
xx,\=xx\=xx\=xx\=xx\=xx\=xx\kill
01 \> \textbf{upon} init() \textbf{do} \\
02 \> \> blocks := $[\bot]^{\nprods}$;\\
03 \> \> hashes :=$[\bot]^{\nprods}$;\\
\\
04 \> \textbf{upon} \emph{produce}($p$, $v$) \textbf{do}\\
05 \> \> blocks := RS-ENC(value,$\nprods$,$\nblocks$,$\omega$);\\
06 \> \> \textbf{forall} $i \in \prods$ \textbf{do}\\
07 \> \> \> hashes[$i$] := \emph{hash}(blocks[$i$]);\\
08 \> \> signature := $s_p(\mblock||\mbox{hashes})$; \\
09 \> \> \textbf{forall} $c \in \mconset_p$ \textbf{do}\\
10 \> \> \> \emph{send}($p$, $c$, [\mblock, blocks[$p$], hashes, signature]);\\
11 \> \> signature := $s_p(\mhash||\mbox{hashes})$; \\
12 \> \> \textbf{forall} $c \in \cons \setminus \mconset_p$ \textbf{do}\\
13 \> \> \> \emph{send}($p$, $c$, [\mhash, hashes, signature]);
\end{tabbing}
}
\end{algorithm}

In turn, each consumer $c$ keeps all the received data blocks
in a vector \emph{blocks} and the received vectors of hashes (along with 
the signatures) in \emph{hashvecs}. In addition, there is a set \emph{missing}
that keeps the identities of the producers that have not yet sent any signed
information. Finally, \emph{correcthashvec} is the correct vector of hashes,
that is, the vector that is sent by at least $\nbyzp +1 $ producers,
and \emph{correctproducers} stores, for each producer, the value
$\bot$ if it has not yet sent any message, or the signature of the message
sent by the producer.

Each consumer uses the functions \emph{verifysig}($i$,$d$) and
\emph{verifyhash}($b$,$h$) to verify the signature by $i$ of $d$ 
and the hash of $b$ when compared to $h$, respectively.
Consumer $c$ is in one of three states: \emph{init},
\emph{gotHashes}, and \emph{consumed}. $c$ is in state
\emph{init} when \emph{hashvecs} does not contain
a majority ($\nbyzp+1$) of identical vectors of hashes.
The function \emph{minimumHashes} (Alg.~\ref{alg:consumer1}, lines 8-12) 
marks the transition between \emph{init} and \emph{gotHashes},
by setting \emph{correcthashvec} to a non-null value,
when the required majority of hashes is gathered by $c$.
Procedure \emph{consume-and-report} (lines 16-23) makes
the transition from \emph{gotHashes} to \emph{consumed}
when the consumer gathers at least $\nblocks$ correct blocks
and, therefore, the invocation of \emph{RS-DEC} returns a non-null
value. In this case, the consumer consumes the value (line 19)
and prepares a report intended to TO (lines 20-23), which is sent
by invoking the procedure \emph{report} (lines 13-15).
This report contains the vector \emph{correcthashvec} and the signature
of all the producers that already sent correct messages to $c$, i.e., 
messages that contained \emph{correcthashvec}. 

Whenever a consumer $c$ receives a $\mblock$ message from a producer
that belongs to $\mbox{\emph{missing}} \cap \mprodset_c$ (Alg.~\ref{alg:consumer2}, line 1),  
$c$ removes $p$ from \emph{missing} if the signature is valid (lines 2-3) and, according to
its state, performs one of the following actions: 
i) If $c$ is still in state \emph{init}, then it stores the received information
in the appropriate vectors and invokes \emph{minimumHashes} (lines 4-8),
in order to verify if it has already gathered a majority of identical vectors of hashes.
If that is the case, then $c$ invokes \emph{consume-and-report} (lines 9-10).
ii) If $c$ is in state \emph{gotHashes}, then it adds the received vector
of hashes along with the signature to \emph{hashvecs}, stores the block,
and invokes \emph{consume-and-report} (lines 11-15).
iii) If $c$ is in state \emph{consumed},
then it adds the signature of the producer to \emph{correctproducers}
and reports the information received from producers to TO (lines 16-18).

An almost identical approach is followed by $c$ whenever
it receives a $\mhash$ message, aside from the fact that in this case
$c$ does not expect to receive any block (lines 19-33). 

\begin{algorithm}[h]
\label{alg:consumer1}
\caption{\nbart ($c \in \cons$): Part I}
{
\scriptsize
\begin{tabbing}
xxx,\=xx\=xx\=xx\=xx\=xx\=xx\=xx\kill
01 \> \textbf{upon} \emph{init} \textbf{do}\\
02 \> \> value :=$\bot$;\\
03 \> \> correcthashvec := $\bot$;\\
04 \> \> hashvecs :=$[\bot]^{\nprods}$;\\
05 \> \> blocks := $[\bot]^{\nprods}$;\\
06 \> \> missing := $\prods$;\\
07 \> \> correctproducers := $[\bot]^{\nprods}$; \\
\\
08 \> \textbf{function}  \emph{minimumHashes}(hashvecs) \textbf{is} \\
09 \> \> \textbf{if} $\exists h: \#\{p | \mbox{hashvecs}[p] = \langle h,* \rangle\} \geq  \nbyzp + 1$ \textbf{then} \\
10 \> \> \> \textbf{return} h; \\
11 \> \> \textbf{else} \\
12 \> \> \> \textbf{return} $\bot$; \\
\\
13 \> \textbf{procedure}  \emph{report} \textbf{is} \\
14 \> \> signature := $s_c$(\mreport$||$correcthashvec$||$correctproducers);\\
15 \> \> \emph{send}($c$, \textit{TO}, [\mreport, correcthashvec, correctproducers, signature]);\\
\\
16 \> \textbf{procedure}  \emph{consume-and-report} \textbf{is} \\
17 \> \> value  := RS-DEC($\mbox{blocks},\nprods,\nblocks,\omega,\mbox{correcthashvec}$);\\
18 \> \> \textbf{if} value $\neq \bot$ \textbf{then} \\
19 \> \> \> \emph{consume}($c$, value);\\
20 \> \> \> \textbf{forall} $p \in \prods $ \textbf{do}\\
21 \> \> \> \> \textbf{if} hashvecs[p] = $\langle$correcthashvec,signature$\rangle$ \textbf{then} \\
22 \> \> \> \> \> correctproducers[p] := signature;\\
23 \> \> \> \emph{report} ();
\end{tabbing}
}
\end{algorithm}

\begin{algorithm}[h]
\label{alg:consumer2}
\caption{\nbart ($c \in \cons$): Part II}
{
\scriptsize
\begin{tabbing}
xxx,\=xx\=xx\=xx\=xx\=xx\=xx\=xx\kill
01 \> \textbf{upon}  \emph{deliver}($p$, $c$, [\mblock, pblock, phashes, msgsig])~$\land$~$p \in \mbox{missing} \cap \mprodset_c$ \textbf{do}\\
02 \> \> \textbf{if} \emph{verifysig}($p$, \mblock$||$phashes, msgsig) \textbf{then} \\
03 \> \> \> missing := missing~$\setminus$~$\{p\}$;\\
04 \> \> \> \textbf{if} \emph{verifyhash}(pblock, phashes[$p$]) \textbf{then}\\
05 \> \> \> \> \textbf{if} correcthashvec $= \bot$ \textbf{then}\\
06 \> \> \> \> \> hashvecs[p] := $\langle$phashes, msgsig$\rangle$; \\
07 \> \> \> \> \> blocks[p] := pblock; \\
08 \> \> \> \> \> correcthashvec := \emph{minimumHashes}(hashvecs); \\
09 \> \> \> \> \> \textbf{if} correcthashvec $\neq \bot$ \textbf{then} \\
10 \> \> \> \> \> \> \emph{consume-and-report} ();\\
11 \> \> \> \> \textbf{else if} value $= \bot$ \textbf{then} \\
12 \> \> \> \> \> \textbf{if} phashes = correcthashvec \textbf{then} \\
13 \> \> \> \> \> \> hashvecs[p] := $\langle$phashes, msgsig$\rangle$; \\
14 \> \> \> \> \> \> blocks[p] := pblock; \\
15 \> \> \> \> \> \> \emph{consume-and-report} ();\\
16 \> \> \> \> \textbf{else if} phashes = correcthashvec \textbf{then} \\
17 \> \> \> \> \> correctproducers[$p$] := msgsig;\\
18 \> \> \> \> \> \emph{report} ();\\
\\
19 \> \textbf{upon}  \emph{deliver}($p$, $c$, [\mhash, phashes, msgsig])~$\land$~$p \in \mbox{missing} \cap \prods \setminus \mprodset_c$  \textbf{do}\\
20 \> \> \textbf{if} \emph{verifysig}($p$, \mhash$||$phashes, msgsig) \textbf{then} \\
21 \> \> \> missing := missing~$\setminus$~$\{p\}$;\\
22 \> \> \> \textbf{if} correcthashvec $= \bot$ \textbf{then} \\
23 \> \> \> \> hashvecs[p] := $\langle$phashes, msgsig$\rangle$; \\
24 \> \> \> \> correcthashvec := \emph{minimumHashes}(hashvecs); \\
25 \> \> \> \> \textbf{if} correcthashvec $\neq \bot$ \textbf{then} \\
26 \> \> \> \> \> \emph{consume-and-report} ();\\
27 \> \> \> \textbf{else if} value $= \bot$  \textbf{then} \\
28 \> \> \> \> \textbf{if} phashes = correcthashvec \textbf{then} \\
29 \> \> \> \> \> hashvecs[p] := $\langle$phashes, msgsig$\rangle$; \\
30 \> \> \> \> \> \emph{consume-and-report} ();\\
31 \> \> \> \textbf{else if} phashes = correcthashvec \textbf{then} \\
32 \> \> \> \> correctproducers[$p$] := msgsig;\\
33 \> \> \> \> \emph{report} ();
\end{tabbing}
}
\end{algorithm}

The trusted observer only waits for $\mreport$ messages from
consumers to include all the received information in the array \emph{evidence} (lines 3-5).
In addition, TO repeatedly tries to produce the evidence about the
transfer whenever it receives new information (line 6).

\begin{algorithm}[h]
\label{alg:to}
\caption{\nbart (trusted observer \textit{TO})}
{
\scriptsize
\begin{tabbing}
xx,\=xx\=xx\=xx\=xx\=xx\=xx\kill
01 \> \textbf{upon} \emph{init} \textbf{do}\\
02 \> \> evidence := $[\bot]^{\ncons}$;\\
\\
03 \> \textbf{upon}  \emph{deliver}($c$, \textit{TO}, [\mreport, hashesvec, producers, signature]) \textbf{do}\\
04 \> \> \textbf{if} \emph{verifySig}($c$, \mreport$||$hashesvec$||$producers, signature) \textbf{then}\\
05 \> \> \> evidence[$c$] := $\langle$hashesvec,producers$\rangle$;\\
06 \> \> \> \emph{certify}(\textit{TO}, evidence);
\end{tabbing}
}
\end{algorithm}

\subsection{Predicates}
We now define the predicates \emph{hasProd} and \emph{hasAck}.
It is said that producer $p$ \emph{is certified} by consumer $c \in \mconset_p$ iff
$\mbox{evidence}[c] = \langle \vec{h}_v, \mbox{report}\rangle$ and $\mbox{report}[p] = s_p(\mblock,\vec{h}_v)$.
We say that producer $p$ is certified by consumer $c \in \cons \setminus \mconset_p$ iff
$\mbox{evidence}[c] = \langle \vec{h}_v, \mbox{report}\rangle$ and $\mbox{report}[p] = s_p(\mhash,\vec{h}_v)$.
Let $\bar{\prods} \subseteq \prods$ and 
$\bar{\cons} \subseteq \cons$ be the greatest sets that fulfil the following conditions:
i) for each $p \in \bar{\prods}$ and $c \in \bar{\cons}$, $p$ is certified by $c$;
and ii) for each $c \in \bar{\cons}$, $c$ invokes \emph{consume}($c$,$v$).

With this in mind, we now define the predicates as follows:
\begin{itemize} 
 \item For the predicates to be true for any process, $\#\bar{\prods} \geq \nprods - \nbyzp$ and $\#\bar{\cons} \geq \ncons - \nbyzc$; 
 \item \emph{hasProd}(evidence,$p$) is true iff $p \in \bar{\prods}$;
 \item \emph{hasAck}(evidence,$c$) is true iff $c \in \bar{\cons}$.
\end{itemize}

\subsection{Correctness}
In this section, the correctness of the above algorithm is proven in an asynchronous
environment, assuming that $\nprods \geq 2\nbyzp+1$, $\ncons \geq \nbyzc+1$,
and that all non-Byzantine processes follow the algorithm.
In the following two lemmas, we start by showing that the consumers eventually gather enough 
information to consume the correct value.

\begin{lemma}
\label{lemma:minimumHashes}
For each non-Byzantine consumer $c \in \cons$, 
\emph{minimumHashes} eventually returns exactly one vector $\vec{h}^* \neq \bot$
and $\vec{h}^* = \vec{h}_v$.
\end{lemma}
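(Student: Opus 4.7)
The plan is to split the statement into two claims: (i) \emph{uniqueness/identification}, namely that whenever \emph{minimumHashes} at $c$ returns a non-$\bot$ vector, that vector must be $\vec{h}_v$; and (ii) \emph{eventuality}, namely that some invocation of \emph{minimumHashes} at $c$ eventually does return a non-$\bot$ vector. Together these give the lemma, since once \emph{correcthashvec} is set to $\vec{h}_v$ the guard \texttt{if correcthashvec $= \bot$} in Alg.~\ref{alg:consumer2} prevents any further call that could yield a different value.

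For (i), I would first observe that the array \emph{hashvecs} is indexed by producer identity and, thanks to the guard $p \in \mbox{missing}$, each entry \emph{hashvecs}$[p]$ is assigned at most once and only with the hash vector contained in a single message actually delivered from $p$. Because all non-Byzantine producers invoke the deterministic function \emph{RS-ENC} on the same correct value $v$, any entry written on behalf of a non-Byzantine producer equals $\langle \vec{h}_v,\cdot\rangle$. Hence entries differing from $\vec{h}_v$ can come only from Byzantine producers, of which there are at most $\nbyzp$. No vector $\vec{h}' \neq \vec{h}_v$ can therefore appear in $\nbyzp+1$ positions, so the threshold test in \emph{minimumHashes} (line~9 of Alg.~\ref{alg:consumer1}) cannot succeed on any vector other than $\vec{h}_v$.

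For (ii), I would use the reliable channels assumption: each non-Byzantine producer $p$ sends $c$ either a $\mblock$ message (if $c\in \mconset_p$) or an $\mhash$ message (otherwise), each carrying $\vec{h}_v$ together with a valid signature under $sk_p$; this message is eventually delivered at $c$. Both \emph{verifysig} and, for $\mblock$ messages, \emph{verifyhash} succeed by construction, so the handler in Alg.~\ref{alg:consumer2} writes $\langle \vec{h}_v,\cdot\rangle$ into \emph{hashvecs}$[p]$. Since $\nprods \geq 2\nbyzp+1$ there are at least $\nbyzp+1$ non-Byzantine producers, so eventually \emph{hashvecs} contains at least $\nbyzp+1$ entries equal to $\langle \vec{h}_v,\cdot\rangle$. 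Every such write, while \emph{correcthashvec} is still $\bot$, is immediately followed by a call to \emph{minimumHashes}, so the first such call after the threshold is reached returns a non-$\bot$ value, which by (i) must be $\vec{h}_v$.

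The only delicate point, and the one I expect to require careful case analysis in a fully written proof, is the control flow of Alg.~\ref{alg:consumer2}: one must check that no non-Byzantine producer's contribution can be silently dropped between its delivery and an invocation of \emph{minimumHashes}. In particular, the $\mblock$ handler removes $p$ from \emph{missing} before the \emph{verifyhash} test, but for non-Byzantine $p$ that test cannot fail; and while the consumer is still in state \emph{init} (i.e.\ \emph{correcthashvec} $= \bot$), both the $\mblock$ and $\mhash$ branches write \emph{hashvecs}$[p]$ and immediately call \emph{minimumHashes}. Once those structural observations are nailed down, the two claims combine to give the lemma.
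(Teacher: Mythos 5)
Your proposal is correct and follows essentially the same route as the paper's proof: all non-Byzantine producers eventually deliver $\vec{h}_v$, the bound $\nprods \geq 2\nbyzp+1$ guarantees both that $\vec{h}_v$ reaches the $\nbyzp+1$ threshold and that no other vector can, and the guard on \emph{correcthashvec} ensures \emph{minimumHashes} is not invoked again after a non-$\bot$ return. Your version merely spells out the control-flow and per-producer single-write details that the paper's terser argument leaves implicit.
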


\begin{proof}
Every non-Byzantine consumer receives $\vec{h}_v$ from all non-Byzantine producers, eventually.
Since $\nprods \geq 2\nbyzp+1$, only the vector $\vec{h}_v$ can
be sent by $\nbyzp + 1$ producers. 
Therefore, \emph{minimumHashes} only returns a non-null vector $\vec{h}^*$ 
if $\vec{h}^* =\vec{h}_v$ and this occurs eventually. Also, when \emph{correcthashvec} becomes
non-null, $c$ never invokes \emph{minimumHashes} again.
\end{proof}

\begin{lemma}
\label{lemma:consume}
For each consumer $c \in \cons$, $c$ eventually invokes \emph{consume}($c$,$v$),
and only once.
\end{lemma}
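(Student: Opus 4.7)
The plan is to decompose the statement into liveness---$c$ eventually invokes \emph{consume}$(c,v)$---and safety---$c$ invokes it at most once---and handle each by inspecting the three states (\emph{init}, \emph{gotHashes}, \emph{consumed}) implicit in Algorithms~\ref{alg:consumer1}-\ref{alg:consumer2}.

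For liveness, I would combine Lemma~\ref{lemma:minimumHashes}, which guarantees that \emph{correcthashvec} eventually equals $\vec{h}_v$, with a block-counting argument. By construction $\#\mprodset_c = \nblocks+\nbyzp$, so at most $\nbyzp$ of the producers in $\mprodset_c$ are Byzantine and at least $\nblocks$ are non-Byzantine; each such producer delivers to $c$ a \mblock{} message carrying a valid signature together with a block whose hash equals $\vec{h}_v[p]$. By channel reliability these messages are eventually delivered and the blocks are stored in \emph{blocks} (line 7 of Alg.~\ref{alg:consumer2} if $c$ is still in \emph{init} when they arrive, or line 14 if $c$ is already in \emph{gotHashes}). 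I would then split on the order of the two critical events: if \emph{correcthashvec} is set after the $\nblocks$-th valid block has already been stored, then the \emph{consume-and-report} call on line 10 (or line 26, if the triggering message is a \mhash{}) sees a sufficient set of blocks and RS-DEC returns $v$; otherwise, the $\nblocks$-th valid block arrives while $c$ is already in \emph{gotHashes}, the handler takes the line 11--15 branch, and the call on line 15 succeeds. In both cases \emph{consume}$(c,v)$ is invoked.

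For safety, a state-machine inspection suffices. The only caller of \emph{consume} is \emph{consume-and-report}, which fires \emph{consume} only when RS-DEC returns a non-$\bot$ value, and in that case permanently sets the local variable \emph{value} to $v$. But \emph{consume-and-report} is itself invoked only from branches whose guards require \emph{value}$=\bot$ (lines 15 and 30) or from the one-shot transition out of \emph{init} (lines 10 and 26), which fires exactly when \emph{correcthashvec} moves from $\bot$ to a non-$\bot$ value and, since \emph{correcthashvec} is never reset afterwards, happens at most once. After \emph{value} becomes non-$\bot$, every subsequent delivery falls into the \emph{consumed} branches (lines 16--18 and 31--33), which only update \emph{correctproducers} and call \emph{report}, never \emph{consume-and-report}. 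Hence \emph{consume} fires exactly once.

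The main obstacle I anticipate is handling the interleavings around the \emph{init}-to-\emph{gotHashes} transition: I must confirm that blocks delivered while $c$ is still in \emph{init} are stored in \emph{blocks} (they are, on line 7 of Alg.~\ref{alg:consumer2}), so that a later message satisfying \emph{minimumHashes} triggers a \emph{consume-and-report} on line 10 or 26 that immediately succeeds, even if no further non-Byzantine blocks arrive afterwards. I also need to observe that blocks originating from Byzantine producers, even when stored in \emph{blocks}, do not cause spurious successful decodings: under collision resistance of \emph{hash}, RS-DEC only counts blocks whose hash lies in $\vec{h}_v$, so Byzantine contributions are either benign (matching the genuine hash) or simply ignored.
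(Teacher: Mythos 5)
Your proof is correct and follows essentially the same route as the paper's: liveness via Lemma~\ref{lemma:minimumHashes} plus the counting argument that $\#\mprodset_c = \nblocks+\nbyzp$ guarantees $\nblocks$ correct blocks from non-Byzantine producers, and safety via inspection of the guards on \emph{value} and \emph{correcthashvec}. Your treatment is somewhat more explicit about the interleavings around the \emph{init}-to-\emph{gotHashes} transition, which the paper dismisses as ``a trivial inspection,'' but the substance is identical.
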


\begin{proof}
It follows from Lemma~\ref{lemma:minimumHashes} that, for each non-Byzantine consumer $c$, \emph{correcthashvec}
is eventually set to $\vec{h}_v$, and $c$ eventually starts invoking \emph{consume-and-report}. 
By the fact that producers send their blocks to all consumers of $\mconset$, and by the conditions
$\nprods \geq \nblocks + \nbyzp$ and $\#\mprodset_c = \nblocks + \nbyzp$,
$c$ eventually receives $\nblocks$ blocks, correct according to $\vec{h}_v$. Hence, RS-ENC eventually returns $v$,
and only $v$ by the property of non-collision of hash functions. A trivial inspection of the algorithm shows
that, once \emph{value} is set to $v \neq \bot$, $c$ consumes $v$ and never invokes \emph{consume-and-report}
again.
\end{proof}

\begin{lemma}
\label{lemma:pcertaux}
For each non-Byzantine producer $p$ and each non-Byzantine consumer $c \in \mconset_p$, 
eventually $c$ certifies $p$.
\end{lemma}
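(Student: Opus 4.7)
The plan is to show that $p$'s $\mblock$ message eventually makes its way into $c$'s \emph{correctproducers} vector and then into $\mbox{evidence}[c]$ at TO, matching exactly the form required by the certification predicate.

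First, since $p$ is non-Byzantine, by Alg.~\ref{alg:producer} lines 8--10, $p$ sends a $\mblock$ message to every consumer in $\mconset_p$ containing $\vec{h}_v$ and the signature $s_p(\mblock||\vec{h}_v)$. Using reliable authenticated channels, this message is eventually delivered to $c$, and the signature check on Alg.~\ref{alg:consumer2} line~2 succeeds. Also, by Lemma~\ref{lemma:minimumHashes}, $c$'s variable \emph{correcthashvec} eventually becomes $\vec{h}_v$ (and never changes thereafter), and by Lemma~\ref{lemma:consume}, $c$ invokes \emph{consume-and-report} and sets \emph{value} $\neq \bot$ at some finite point.

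The core of the argument is a case analysis on $c$'s state at the moment the $\mblock$ message from $p$ is delivered. If \emph{correcthashvec} $=\bot$ at delivery (lines 5--10), then $c$ installs $\langle \vec{h}_v, s_p(\mblock||\vec{h}_v)\rangle$ in \emph{hashvecs}[p]; by Lemma~\ref{lemma:consume}, \emph{consume-and-report} will later succeed, and lines 20--22 of Alg.~\ref{alg:consumer1} will then copy this signature into \emph{correctproducers}[p] before \emph{report} is invoked. If \emph{correcthashvec} $\neq \bot$ but \emph{value} $= \bot$ (lines 11--15), the same reasoning applies: the hash vector sent by $p$ equals \emph{correcthashvec}, so \emph{hashvecs}[p] is populated and the subsequent successful \emph{consume-and-report} fills in \emph{correctproducers}[p]. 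Finally, if \emph{value} $\neq \bot$ already (lines 16--18), the signature is written directly into \emph{correctproducers}[p] and \emph{report} is called immediately.

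In all three cases, some invocation of \emph{report} by $c$ occurs after $p$'s signature is installed in \emph{correctproducers}[p], so $c$ sends a $\mreport$ message to TO carrying \emph{correcthashvec} $=\vec{h}_v$ and \emph{correctproducers}[p] $= s_p(\mblock||\vec{h}_v)$. Because $c$ is non-Byzantine, the signature verification at Alg.~\ref{alg:to} line~4 succeeds, so $\mbox{evidence}[c]$ is set to $\langle \vec{h}_v, \mbox{report}\rangle$ with $\mbox{report}[p] = s_p(\mblock||\vec{h}_v)$, which is exactly the condition for $c$ to certify $p$. The main subtlety to handle carefully is the interleaving between the arrival of $p$'s message and the first successful invocation of \emph{consume-and-report}: one must check that in the branch where the message arrives before \emph{correcthashvec} is set, the entry in \emph{hashvecs}[p] is never overwritten or cleared before the eventual \emph{consume-and-report} scan, which follows from inspecting the algorithm (the only writes to \emph{hashvecs}[p] occur when $p\in\mbox{missing}$, and $p$ is removed from \emph{missing} upon the first valid delivery).
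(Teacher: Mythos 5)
Your proof is correct and follows essentially the same route as the paper's: a case analysis on the consumer's state (\emph{init}, \emph{gotHashes}, \emph{consumed}) at the moment $p$'s message is delivered, combined with Lemmas~\ref{lemma:minimumHashes} and~\ref{lemma:consume} to guarantee that a \emph{report} carrying $p$'s signature eventually reaches TO. Your version is merely more explicit about the line-level bookkeeping (in particular, that \emph{hashvecs}$[p]$ persists until the first successful \emph{consume-and-report}), which the paper's proof leaves implicit.
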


\begin{proof}
According to the algorithm, $p$ always sends a $\mblock$ message containing its block and $s_p(\mblock||\vec{h}_v)$
to all $c \in \mconset_p$, whereas $p$ sends a $\mhash$ message to
all  $c \in \cons \setminus \mconset_p$, containing $s_p(\mhash||\vec{h}_v)$.
If $c$ receives this information when it is still in one of the states \emph{init}
and \emph{gotHashes}, then, by Lemma~\ref{lemma:consume}, 
$c$ eventually sends a report to TO containing this information. If $c$ is already in state \emph{consumed},
then $c$ immediately sends the report containing this information
when it receives the message from $p$. Either way, $c$ eventually certifies $p$.
\end{proof}

\begin{lemma}
\label{lemma:certsets}
There exist sets $\bar{\prods}$ and $\bar{\cons}$ of non-Byzantine producers and non-Byzantine consumers,
respectively, such that: i) $\#\bar{\prods} \geq \nprods - \nbyzp$ and $\#\bar{\cons} \geq \ncons - \nbyzc$;
ii) for each $p \in \bar{\prods}$ and $c \in \bar{\cons}$, $c$ eventually certifies $p$;
and iii) for each $c \in \bar{\cons}$, $c$ eventually invokes \emph{consume}($c$,$v$), where
$v$ is the correct value.
\end{lemma}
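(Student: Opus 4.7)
The plan is to take the obvious candidates, namely $\bar{\prods}$ equal to the set of all non-Byzantine producers and $\bar{\cons}$ equal to the set of all non-Byzantine consumers, and verify that the three required conditions hold. Condition (i) is then immediate from the bounds on Byzantine processes: there are at most $\nbyzp$ Byzantine producers and $\nbyzc$ Byzantine consumers out of $\nprods$ and $\ncons$ respectively, so the two sets have the required cardinalities.

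For condition (iii), I would directly invoke Lemma~\ref{lemma:consume}, which guarantees that every non-Byzantine consumer $c$ eventually invokes \emph{consume}($c$,$v$); combined with Lemma~\ref{lemma:minimumHashes}, the argument in the proof of Lemma~\ref{lemma:consume} shows that the value consumed is the correct $v$, since \emph{correcthashvec} is set to $\vec{h}_v$ and RS-DEC only returns a non-$\bot$ value consistent with $\vec{h}_v$.

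The main work is condition (ii). For a non-Byzantine $p \in \bar{\prods}$ and $c \in \bar{\cons}$, I would split into two cases according to whether $c \in \mconset_p$ or $c \in \cons \setminus \mconset_p$. The first case is exactly Lemma~\ref{lemma:pcertaux}. The second case requires a parallel argument that I would need to make explicitly: since $p$ is non-Byzantine, it sends a $\mhash$ message containing $s_p(\mhash||\vec{h}_v)$ to every $c \in \cons \setminus \mconset_p$; by Lemma~\ref{lemma:consume}, $c$ eventually reaches state \emph{consumed}; and by inspection of Alg.~\ref{alg:consumer2} (lines 19--33), upon receiving the $\mhash$ message $c$ either stores $s_p(\mhash||\vec{h}_v)$ in \emph{correctproducers}[$p$] during its path to consumption or, if already in state \emph{consumed}, immediately adds it and invokes \emph{report}, so TO eventually receives a report with $\mbox{report}[p] = s_p(\mhash,\vec{h}_v)$ as required by the definition of certification.

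The only subtlety I anticipate is bookkeeping: one must make sure that in every execution branch of Alg.~\ref{alg:consumer2} where the $\mhash$ or $\mblock$ message from a non-Byzantine $p$ is delivered, the signature indeed ends up in \emph{correctproducers}[$p$] before a subsequent \emph{report} call. This is immediate in the \emph{consumed} branch (lines 16--18 and 31--33); in the \emph{init} and \emph{gotHashes} branches one relies on the fact that \emph{consume-and-report} (lines 20--22 of Alg.~\ref{alg:consumer1}) scans \emph{hashvecs} and transfers every entry matching \emph{correcthashvec} into \emph{correctproducers} before calling \emph{report}. Checking this case analysis is the bulk of the proof; everything else reduces to applying the previous three lemmas.
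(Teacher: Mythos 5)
Your proposal is correct and takes essentially the same route as the paper: choose $\bar{\prods}$ and $\bar{\cons}$ to be the full sets of non-Byzantine producers and consumers, get (i) from counting, (iii) from Lemma~\ref{lemma:consume}, and (ii) from Lemma~\ref{lemma:pcertaux}. You are in fact slightly more careful than the paper, whose proof of (ii) cites only Lemma~\ref{lemma:pcertaux} even though that lemma's \emph{statement} covers only $c \in \mconset_p$; the parallel $\mhash$ argument you spell out for $c \in \cons \setminus \mconset_p$ is present in the body of that lemma's proof but not in its statement, so your explicit case split closes a small gap in the paper's bookkeeping rather than introducing a new idea.
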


\begin{proof}
i) follows from the fact that there are $\nprods - \nbyzp$ non-Byzantine producers
and $\ncons - \nbyzc$ non-Byzantine consumers; ii) follows from Lemma~\ref{lemma:pcertaux};
and iii) follows from Lemma~\ref{lemma:consume}.
\end{proof}

The next theorem concludes the proofs of correctness by showing that
each \nbart property is fulfilled by the presented algorithm.

\begin{theorem}
\label{theorem:correctness}
The proposed algorithm solves \nbart in an asynchronous environment,
assuming that all non-Byzantine processes follow the algorithm.
\end{theorem}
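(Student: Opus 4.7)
The plan is to check each of the seven \nbart properties in turn, invoking the machinery already developed in Lemmas~\ref{lemma:minimumHashes}--\ref{lemma:certsets} so that the argument reduces to a short case analysis over the itemised properties rather than a fresh induction on executions.

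First I would handle the ``consumption'' block (\nbart-1, \nbart-2, \nbart-3, \nbart-4). Lemma~\ref{lemma:consume} already gives Eventual Consumption directly, and its proof sketch (together with the fact that \emph{consume-and-report} leaves \emph{value} set after the first success and never re-enters the branch that invokes \emph{consume}) yields Integrity by a one-line inspection of Alg.~\ref{alg:consumer1}, lines~17--19. For Validity and Agreement I would appeal to Lemma~\ref{lemma:minimumHashes}: every non-Byzantine consumer fixes \emph{correcthashvec} to $\vec{h}_v$, so any value returned by \emph{RS-DEC} is decoded against the genuine hash vector. Collision resistance of \emph{hash} then forces the decoded value to be $v$, the common output of the non-Byzantine producers, which gives both Validity (the consumed value originated from a non-Byzantine producer) and Agreement (two non-Byzantine consumers cannot obtain distinct values).

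Next I would dispatch the ``certification'' block (\nbart-5, \nbart-6, \nbart-7) by unwinding the definitions of \emph{hasProd} and \emph{hasAck} against Lemma~\ref{lemma:certsets}. The lemma supplies sets $\bar{\prods}$ and $\bar{\cons}$ of the required cardinalities such that every $c \in \bar{\cons}$ both consumes $v$ and eventually certifies every $p \in \bar{\prods}$. Tracing the code, once $c$ has certified $p$ the pair $(\vec{h}_v, \mathrm{report})$ with $\mathrm{report}[p]$ equal to the appropriate $s_p(\cdot)$ is sent to TO inside a \mreport{} message, and each such message triggers a fresh \emph{certify} call (Alg.~\ref{alg:to}, line~6). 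Because the channels are reliable and TO repeats \emph{certify} on every new report, the latest invocation carries the full evidence justifying \emph{hasProd}(evidence,$p$) for every $p \in \bar{\prods}$ and \emph{hasAck}(evidence,$c$) for every $c \in \bar{\cons}$, so all three certification properties follow simultaneously. For non-Byzantine producers outside $\bar{\prods}$ (if any) and non-Byzantine consumers outside $\bar{\cons}$, the lemma's cardinality bounds ensure their inclusion by choosing $\bar{\prods}$ and $\bar{\cons}$ maximally, so \nbart-6 and \nbart-7 apply to \emph{every} non-Byzantine process.

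The only subtlety I anticipate is the Evidence property (\nbart-5): one must make sure TO's \emph{certify} is called \emph{after} the moment when \emph{hasProd} and \emph{hasAck} have become true for all relevant processes, not merely infinitely often beforehand. The resolution is that TO re-invokes \emph{certify} on every delivery of a new \mreport, and by Lemma~\ref{lemma:pcertaux} (together with the ``consumed'' branch in Alg.~\ref{alg:consumer2}, lines~16--18 and 31--33) every non-Byzantine consumer in $\bar{\cons}$ will eventually send a report whose \emph{correctproducers} entries cover all of $\bar{\prods}$; this final report triggers a \emph{certify} call that satisfies the evidence predicate. This monotone ``report on any new signature'' behaviour is exactly why the algorithm is written to keep processing messages after reaching state \emph{consumed}, and it is the key detail I would emphasise to close the proof.
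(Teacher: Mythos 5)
Your proposal is correct and follows essentially the same route as the paper: a property-by-property case analysis that derives Integrity and Eventual Consumption from Lemma~\ref{lemma:consume}, Validity and Agreement from Lemma~\ref{lemma:minimumHashes} plus hash non-collision, and the three certification properties from Lemma~\ref{lemma:certsets} together with TO's re-invocation of \emph{certify} on every new \mreport{} message. Your extra care about the Evidence property (ensuring \emph{certify} is called \emph{after} the predicates become true) is a welcome elaboration of the paper's one-line argument, but it is the same underlying idea.
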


\begin{proof}
The proof is performed individually for each property:
\begin{itemize}
\item (\emph{Validity}): By Lemmas~\ref{lemma:minimumHashes} and~\ref{lemma:consume} and by the non-collision property
of hash functions, $c$ consumes the correct value, which is produced by all non-Byzantine producers.

\item (\emph{Integrity}): Follows from Lemma~\ref{lemma:consume}.

\item (\emph{Agreement}): It follows directly from \emph{Validity} and the fact
that all non-Byzantine producers send a block corresponding
to the same value.

\item (\emph{Eventual Consumption}): Follows from Lemma~\ref{lemma:consume}.

\item (\emph{Evidence}): TO invokes \emph{certify}(evidence) whenever it receives
new information, either from producers or consumers. Thus,
whenever \emph{hasProd}(evidence,$p$)
and \emph{hasAck}(evidence,$c$) become true for each non-Byzantine
producer $p$ and non-Byzantine consumer $c$ respectively, TO invokes
\emph{certify}(evidence).

\item (\emph{Producer} and \emph{Consumer Certification}): Follows from 
Lemma~\ref{lemma:certsets}.

\end{itemize}
\end{proof}

\subsection{Complexity Analysis}
The algorithm is evaluated in terms of message and bit complexity.
The message complexity is $O(\nprods \ncons)$ due to $\ncons$ messages
sent by each producer that contain the signature of $\vec{h}_v$. However, since
the value may be arbitrarily large, the size of each message may vary significantly,
so it is interesting to also evaluate the number of bits exchanged,
that is, the bit complexity. For this analysis,
let $l_v$, $l_s$, and $l_h$ denote the bit length of the value, a signature
and an hash. The bit complexity is $O(\ncons (\nblocks + \nbyzp) \frac{l_v}{\nblocks} + \nprods\ncons( l_s + \nprods l_h))$.
Notice that, if $\nblocks \geq O(\nbyzp)$ and $l_v \gg l_s,l_h$, then the bit complexity is $O(\ncons)$, 
which is asymptotically optimal, since there must be at least a value transfer per consumer.

\section{Game Theoretic Analysis}
\label{sec:gametheory}

The purpose of this analysis is to show that it is in every Rational process interest
to follow the algorithm. We take into consideration some degree of arbitrary collusion.

\subsection{Definitions}
The algorithm is modelled as a coalitional game $\Gamma=(\idset,\collset,\Sigma_\idset,(\succeq_t)_{t \in \collset},(u_i)_{i \in \idset})$:

\begin{itemize}
 \item $\idset = \prods \cup \cons \cup \{\mbox{TO}\}$ is the set of players.
 
 \item $\collset$ is the set of non-empty subsets of $\idset \setminus \{\mbox{TO}\}$, which contains all the possible coalitions. 
 Each coalition $t \in \collset$ may contain simultaneously producers and consumers,
represented by $t_\prods= t \cap \prods$ and $t_\cons = t \cap \cons$, respectively.

 \item $\Sigma_\idset$ is a set containing all the profile of pure strategies $\vsigma_\idset$ followed by all players
of $\idset$. $\Sigma_t$ for $t \in \collset$ denotes the set of all collusion strategies the players of $t$ may follow.
 \item $\succeq_t$ is a preference relation on $\Sigma_\idset \times \Sigma_\idset$. We assume that $\succeq_t$ is transitive and reflexive.
We can define the relation of strict preference $\succ_t$ as:  for any two profiles of strategies $\vsigma_\idset^*,\vsigma_\idset' \in \Sigma_\idset$, 
$\vsigma_\idset^* \succ_t \vsigma_\idset'$ iff $\neg(\vsigma_\idset' \succeq_t \vsigma_\idset^*)$.
If $\vsigma_\idset^* \succ_t \vsigma_\idset'$, then all the players of $t$ will always follow $\vsigma_\idset^*$ over $\vsigma_\idset' $.

 \item $u_i$ is the utility function of each player $i \in \idset$, 
defined as $u_i(\vsigma_\idset) = \beta_i(\vsigma_\idset) - \alpha_i(\vsigma_\idset)$, 
where $\beta_i(\vsigma_\idset)$ are the benefits and $\alpha_i(\vsigma_\idset)$ the costs
$i$ incurs when players obey $\vsigma_\idset$. 
\end{itemize}

Sometimes, we will denote the composition
of two profiles $\vsigma_A$ and $\vsigma_B$ as $\vsigma_{A \cup B} = (\vsigma_A,\vsigma_B)$,
where $A$ and $B$ are any two disjoint sets of players. Conversely, $u_i(\vsigma_A,\vsigma_B)$ is
equivalent to $u_i(\vsigma_{A \cup B})$. Each producer $p$ obtains a benefit $\pbenef$ 
iff \emph{hasProd}(evidence,$p$) eventually becomes true, whereas each consumer $c$ 
obtains a benefit $\cbenef$ iff \emph{hasAck}(evidence,$c$) eventually becomes true. 
It is assumed that for all $p \in \prods$, $\pbenef > \alpha_p(\vsigma_\idset)$, 
and for all $c \in \cons$, $\cbenef > \alpha_c(\vsigma_\idset)$, where $\vsigma_\idset$ is 
the profile of strategies where all players follow the algorithm.

A coalition $t$ is said to be Rational if the preference relation $\succeq_t$ fulfils the following condition:

$$ \forall_{i \in t} \forall_{\vsigma_\idset \in \Sigma_\idset, \vsigma^*_t \in \Sigma_t} u_i(\vsigma_{\idset}) \geq u_i(\vsigma_t^*, \vsigma_{\idset \setminus t}) 
\Rightarrow (\vsigma_t, \vsigma_{\idset \setminus t}) \succeq_t (\vsigma_t^*, \vsigma_{\idset \setminus t}).$$

We assume that the same relation holds, by only replacing $\geq$ for $>$ and $\succeq_t$ for $\succ_t$.
It follows that if $\#t = 1$ and the only player $i \in t$ is Rational, then
for any two profiles of strategies $\vsigma_\idset^*,\vsigma_\idset' \in \Sigma_\idset$, 
$\vsigma_\idset^* \succeq_t \vsigma_\idset'$ iff $u_i(\vsigma_\idset^*) \geq u_i(\vsigma_\idset')$.
On the contrary, if $\#t=1$ and the player $i \in t$ is Altruistic, then $t$ is also said to
be Altruistic and it is true that $(\vsigma_t,\vsigma_{\idset \setminus t}^*) \succ_t (\vsigma_\idset^*)$ 
for all $\vsigma_\idset^* \in \Sigma_\idset$ and considering that $\vsigma_\idset$ denotes the profile
of strategies where all players follow the algorithm. In any other case, $t$ is Byzantine, implying that $\succeq_t$ is arbitrary
due to the Byzantine behaviour of some player from $t$. It is important to notice that, if $t$ is Byzantine,
then all players of $t$ are also considered to be Byzantine, even if some of them have Rational intentions.
A coalition $t$ is said to be a producer ($t \in \collset_\prods$) if $t_\prods \neq \emptyset$
and it is said to be a consumer ($t \in \collset_\cons$) if $t_\cons \neq \emptyset$. 
The purpose of these definitions is to model scenarios of arbitrary collusion where, for instance, 
a producer $p$ never executes any local function to produce the value. Instead, it
requests the hash of the blocks to other player $i$, signs this information, and sends it
to $i$. Then, $i$ may transfer the block and the signature of $p$ to all the consumers
that expect this information, as if it were sent by $p$.

For simplicity, we model Byzantine behaviour as a single coalition composed by 
up to $\nbyzp + \nbyzc$ players.
We consider an arbitrary number of non-Byzantine coalitions, 
as long as each coalition is never composed
by more than $\nconidsp$ producers and $\nconidsc$ consumers.
The distinction between producers and consumers
will allow us a more refined analysis of the bounds on the minimum
number of producers and consumers. If we only considered a single parameter, the bounds would be stricter than necessary. As it will be shown later, we now require the following 
conditions to hold for the algorithm to be
tolerant to collusion: $\nprods \geq \max(\nbyzp,\nconidsp) + \nbyzp +1$ 
and $\ncons \geq \nbyzc + \nconidsc +1$.

\subsection{Expected Utility and Solution Concept}

We use the notion of Byzantine-aware utility function for risk-averse players
introduced in~\cite{bar-theory}. An improvement of this work for models
where players may be risk-seekers is left for future work. Let $\byzp$ and $\byzc$ denote the set of Byzantine
producers and consumers, respectively, and let $\vec{\pi}_\prods \in \Pi_\prods$ and
$\vec{\pi}_\cons \in \Pi_\cons$ be the corresponding profiles of strategies. 
Let us denote by $\vsigma_{\idset \setminus \byz, \vec{\pi}_\cons,\vec{\pi}_\cons}$
the profile of strategies where all non-Byzantine players follow the strategy specified by $\vsigma_\idset$,
Byzantine producers follow the strategies of $\vec{\pi}_\prods$ and Byzantine consumers
obey the strategies of $\vec{\pi}_\cons$. The expected
utility of each player $i \in \idset \setminus \byz$ is defined as follows:

\begin{equation}
\label{eq:util}
\exputil_i(\vsigma_\playerset) = \min_{\byzp:\#\byzp \leq \nbyzp,\byzc: \#\byzc \leq \nbyzc}  \circ
 \min_{\vec{\pi}_\prods \in \Pi_\prods,\vec{\pi}_\cons \in \Pi_\cons} \circ  u_i(\vsigma'_{M \setminus \byz,\vec{\pi}_\cons,\vec{\pi}_\cons}).
\end{equation}

Recall that, since we consider communication costs, a solution concept as strong as $(k,t)$-robustness is impossible in our case.
To overcome this impossibility result, we use the concept of $k$-resilience combined
with the Byzantine aware utility function defined above.
However, we still cannot ensure that no player from a coalition $t$
can increase its utility regardless of whether some other player 
obtains a lower utility or not. What we intend to show is that, regardless of
the preferred collusion strategy of each coalition, the chosen strategies fulfil
the \nbart properties.

In order to formalise this intuition, we define the \emph{observable behaviour}
of each coalition $t \in \collset$ for the profile of strategies $\vsigma_t$ as a multi-set
of events triggered in each player $i \in \idset \setminus t$ that are influenced by $\vsigma_t$,
which we denote by $\phi_i(\vsigma_t)$. For any player $i \in \idset \setminus t$,
the delivery of a message sent by some player $j \in t$ is an event.
In addition, there are two events triggered in TO, namely \emph{produce}($p$,$v$)
for each $p \in t_\prods$ and \emph{consume}($c$,$v$) for each $c \in t_\cons$.
Henceforth, the meaning of a producer producing a value or a consumer
consuming a value is that the corresponding event is eventually triggered in TO.

We say that collusion profile $\vsigma_t^* \in \Sigma_t$ is 
compliant with the profile $\vsigma _\idset=(\vsigma_t,\vsigma_{\idset \setminus t})$
if $\forall_{i \in \idset \setminus t} \phi_i(\vsigma_t^*)=\phi_i(\vsigma_t)$.
The set of profiles of strategies compliant with $\vsigma_\idset$ is denoted by $\Sigma_t(\vsigma_\idset)$,
where $\vsigma_t \in \Sigma_t(\vsigma_\idset)$.
The solution concept we use in this work, named \emph{$n$ 
collusion tolerance} ($n$-cotolerance), 
is similar to the concept of $k$-resilience, aside from
the fact that we do not require that players in collusion follow the algorithm
exactly; only that they follow a profile of strategies from $\Sigma_t(\vsigma_\idset)$. 
More precisely:

\begin{definition}
\label{def:cotolerance}
For any $n \in \mathbb{N}$, a profile of strategies $\vsigma_\idset$ is $n$-cotolerant
iff for all $t \in \collset$ such that $\#t \leq n$, for all $\vsigma_t^* \in \Sigma_t(\vsigma_\idset)$ such that 
$(\vsigma_t^*,\vsigma_{\idset \setminus t}) \succeq_t \vsigma_\idset$, 
and for all $\vsigma_t' \in \Sigma_t \setminus \Sigma_t(\vsigma_\idset)$,
$(\vsigma_t^*,\vsigma_{\idset \setminus t}) \succ_t (\vsigma_t',\vsigma_{\idset \setminus t})$.
\end{definition}

The above definition is generic and may be of independent interest. In order to apply it to the \nbart problem, 
we additionally need to capture the distinction between producers and consumers. Therefore,
we introduce two parameters $x,y \in \mathbb{N}$, that establish the limit on the number of 
producers and consumers within the coalition respectively, such that $n \geq x, y$ 
and $n \leq x+y$. With this definition, if $n =1$, then there is no collusion among non-Byzantine players. 
Henceforth, we will say that a profile of strategies $\vsigma_\idset$ is $(n,x,y)$-cotolerant iff it is $n$-cotolerant, $n \geq x, y$ 
and $n \leq x+y$, and for all $t \in \collset$ $\#t_\prods \leq x$ and $\#t_\cons \leq y$.

\subsection{Tolerance to Collusion}
The purpose of this section is twofold: i) show that, considering that $\vsigma_\idset$ denotes
the profile of strategies where all players follow the algorithm, for any combination
of Byzantine and Rational collusions, and any coalition $t$, if all players of $t$ follow
a profile of strategies from $\Sigma_t(\vsigma_\idset)$, then the \nbart properties are fulfilled;
and ii) show that any profile of strategies $\vsigma_t^* \in \Sigma_t$ is preferable to $\vsigma_t$
only if $\vsigma_t^* \in \Sigma_t(\vsigma_\idset)$. The proofs of this section rely
on the assumption that $\nprods \geq \max(\nbyzp,\nconidsp) + \nbyzp + 1$
and $\ncons \geq \nbyzc + \nconidsc + 1$.

We find it useful to identify the following corollary that states that in any coalition $t$,
\emph{produce}($p$,$v$) must be invoked for all $p \in t_\prods$, which follows from 
the fact that $\mbox{\emph{produce}}(p,v) \in \phi_{\mbox{TO}}(\vsigma_t)$.

\begin{corollary}
\label{corollary:mustproduce}
For any $t \in \collset_\prods$, if $t$ follows a profile of strategies from $\Sigma_t(\vsigma_\idset)$,
then for each $p \in t_\prods$, $p$ invokes \emph{produce}$(p,v)$.
\end{corollary}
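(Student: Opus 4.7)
The plan is to unpack the definitions of observable behaviour and of compliance ($\Sigma_t(\vsigma_\idset)$), and then read the conclusion directly off them. The corollary is essentially a bookkeeping step that isolates, for later sections, the property that under any compliant deviation the TO still witnesses \emph{produce}$(p,v)$ for every producer in the coalition.

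More concretely, I would proceed as follows. First, recall that $\phi_{\mbox{TO}}(\vsigma_t)$ is the multi-set of events triggered at TO that are influenced by the coalition's strategy, and that the list of such events explicitly includes \emph{produce}$(p,v)$ for every $p \in t_\prods$. Next, evaluate this multi-set for the reference profile $\vsigma_\idset$, in which every player (in particular every producer in $t$) follows the algorithm of Alg.~\ref{alg:producer}: by inspection, each $p \in t_\prods$ invokes \emph{produce}$(p,v)$, and so \emph{produce}$(p,v) \in \phi_{\mbox{TO}}(\vsigma_t)$ for every $p \in t_\prods$. Then, if $t$ plays any $\vsigma_t^* \in \Sigma_t(\vsigma_\idset)$, compliance requires that $\phi_{\mbox{TO}}(\vsigma_t^*) = \phi_{\mbox{TO}}(\vsigma_t)$, so the same multi-set of \emph{produce}$(p,v)$ events is triggered at TO. Finally, invoke the convention stated immediately after the definition of observable behaviour, namely that a producer is said to produce a value exactly when the corresponding event is eventually triggered at TO. This yields that each $p \in t_\prods$ must invoke \emph{produce}$(p,v)$ under $\vsigma_t^*$, which is the claim.

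There is no real obstacle here: the proof is entirely a definitional chase. The only subtle point is the convention that identifies ``$p$ invokes \emph{produce}$(p,v)$'' with ``the event \emph{produce}$(p,v)$ is triggered at TO'', and this has already been explicitly stated in the paragraph that introduces $\phi_i(\vsigma_t)$, so no additional argument is required. The corollary is worth stating only because it will be invoked repeatedly in the tolerance-to-collusion proofs that follow.
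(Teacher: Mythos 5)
Your argument is exactly the paper's: the text preceding the corollary justifies it in one line by noting that \emph{produce}$(p,v) \in \phi_{\mbox{TO}}(\vsigma_t)$ for each $p \in t_\prods$, so any compliant $\vsigma_t^* \in \Sigma_t(\vsigma_\idset)$ must trigger the same events at TO, which is precisely the definitional chase you carry out in more detail. The proposal is correct and matches the paper's approach.
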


Let $\epsilon : \Sigma_t \rightarrow {\cal E}^{\ncons}$
be a function that for each profile $\vsigma_\idset^*$ returns an instance of the data structure
$\mbox{evidence} \in {\cal E}^{\ncons}$ stored by TO when it produces evidence about the transfer,
by replacing any entrance corresponding to a Byzantine player by the value $\bot$, i.e., 
if $c \in \byzc$ and $e = \epsilon(\vsigma_\idset^*)$, then $e[c] = \bot$, and if $p \in \byzp$,
then $e[c][p]=\bot$ for all $c \in \cons \setminus \byzc$.

We state the following proposition that $\epsilon$ depends only on the observable behaviour of
each player:
\begin{proposition}
\label{prop:evidence}
For any $t \in \collset$ and for any profile $\vsigma_\idset^* \in \Sigma_\idset$,
if $\phi_{\mbox{TO}}(\vsigma_\idset^*) = \phi_{\mbox{TO}}(\vsigma_{\idset})$,
then $\epsilon(\vsigma_\idset^*)=\epsilon(\vsigma_\idset)$.
\end{proposition}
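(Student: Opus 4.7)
The plan is to argue directly from the text of Alg.~\ref{alg:to}: the only way \emph{evidence} is modified is through the handler for \emph{deliver}($c$, \textit{TO}, [\mreport, $\ldots$]) events, and by definition $\phi_{\mbox{TO}}(\vsigma_\idset^*)$ already records every such delivery that is causally influenced by the coalition $t$. Events triggered at TO by players outside $t$ are, by construction of $\phi_i$, unaffected when we vary $\vsigma_t$ within the compliance class, so they produce the same updates under both profiles. Hence the set of \emph{deliver} events at TO under $\vsigma_\idset^*$ is identical to the set under $\vsigma_\idset$.

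Concretely, I would proceed in three short steps. First, I observe that the procedure executed at TO upon each \mreport delivery is deterministic: given $(c,\mbox{hashesvec},\mbox{producers},\mbox{signature})$, the handler either rejects the message (if \emph{verifySig} fails) or overwrites $\mbox{evidence}[c]$ with $\langle\mbox{hashesvec},\mbox{producers}\rangle$. Second, because $\phi_{\mbox{TO}}(\vsigma_\idset^*)=\phi_{\mbox{TO}}(\vsigma_\idset)$ as multi-sets, the two executions of TO process exactly the same multi-set of \emph{deliver} events (with the same payloads, since the payload is part of the event), and the promptness principle together with the asynchronous model leaves the final value of $\mbox{evidence}[c]$ determined solely by the (last) \mreport message delivered for $c$. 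Consequently, for every non-Byzantine consumer $c$, the entry $\mbox{evidence}[c]$ coincides in both executions. Third, for Byzantine $c$ (resp. producer $p$) the definition of $\epsilon$ blanks out $e[c]$ (resp. $e[c][p]$) to $\bot$ independently of $\vsigma_\idset^*$, so those entries trivially agree.

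Combining the three observations yields $\epsilon(\vsigma_\idset^*)=\epsilon(\vsigma_\idset)$. The main subtlety to be careful about is the asynchronous, potentially non-idempotent update at TO: since the handler overwrites $\mbox{evidence}[c]$ on every valid \mreport delivery, two executions might end up with different final values if the \emph{order} of deliveries differed. This is why it matters that $\phi_{\mbox{TO}}$ be interpreted as the full trace of events at TO (including their order, or at least enough information to pin down the last valid \mreport from each $c$), and not merely as an unordered set; I would state this interpretation explicitly at the start of the argument so that the appeal to $\phi_{\mbox{TO}}(\vsigma_\idset^*)=\phi_{\mbox{TO}}(\vsigma_\idset)$ gives equal final states. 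Once this is settled, the rest of the proof is a direct reading of Alg.~\ref{alg:to}.
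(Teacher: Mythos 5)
The paper states Proposition~\ref{prop:evidence} without any proof at all --- it is offered as an immediate consequence of the remark that ``$\epsilon$ depends only on the observable behaviour of each player'' --- so there is no argument of the authors' to compare yours against line by line. Your proposal is the natural formalisation of exactly that implicit justification: the array \emph{evidence} at TO is updated only by the deterministic \mreport{} handler, the inputs to that handler are precisely the \emph{deliver} events recorded in $\phi_{\mbox{TO}}$, and the Byzantine entries are blanked by $\epsilon$ regardless of the profile. That is sound and is, in spirit, the same (one-line) argument the paper is silently relying on; you have simply written it out.

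The one place where you go genuinely beyond the paper is the ordering caveat, and it is worth keeping: $\phi_i$ is defined in the paper only as a \emph{multi-set} of events, while the handler in Alg.~\ref{alg:to} \emph{overwrites} $\mbox{evidence}[c]$ on every valid \mreport, so two executions delivering the same multi-set in different orders could in principle leave different final states. Your fix --- reading $\phi_{\mbox{TO}}$ as a trace, or at least as determining the last valid report per consumer --- is one legitimate repair; an alternative that stays closer to the paper's definitions is to note that in any compliant execution the \emph{correctproducers} vector reported by a non-Byzantine consumer grows monotonically, and that $\epsilon$ is evaluated at the point where TO produces evidence about the transfer, i.e.\ after all reports have been delivered, so the final entry is the maximal report independently of delivery order. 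Either way the conclusion stands; you should just state explicitly which reading of $\phi_{\mbox{TO}}$ you adopt, as you propose to do. No gap beyond this definitional looseness, which originates in the paper rather than in your argument.
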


We show in the following theorem that if each coalition $t$ follows a strategy from $\Sigma_t(\vsigma_t^*)$,
then the algorithm tolerates collusion. Fix any arbitrary $f \in \byz$ such that $\#f_\prods \leq \nbyzp$ and $\#f_\cons \leq \nbyzc$.
By assumption, $\nprods \geq \max(\nbyzp,\nconidsp) + \nbyzp +1$
and $\ncons \geq \nbyzc + \nconidsc +1$, and let $l$ denote an arbitrary partition 
of $\idset \setminus (\{\mbox{TO}\} \cup f)$ such that, for any $t \in l$,
$\#t_\prods \leq \nconidsp$ and $\#t_\cons \leq \nconidsc$. We use the notation $\#(e,m)$ to denote
the frequency of element $e$ in the multi-set $m$.

\begin{theorem}
\label{theorem:goodalternative}
For some arbitrary partition $l$, and for any $\vsigma_\idset^*= ((\vsigma_t^*)_{t \in l,\vsigma_t^* \in \Sigma_t(\vsigma_\idset)},(\pi_p)_{p \in f_\prods},(\pi_c)_{c \in f_\cons})$,
if all players follow $\vsigma_\idset^*$, then the \nbart properties are fulfilled.
\end{theorem}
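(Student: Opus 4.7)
The central tool is the compliance condition $\vsigma_t^* \in \Sigma_t(\vsigma_\idset)$, which guarantees $\phi_i(\vsigma_t^*) = \phi_i(\vsigma_t)$ for every non-Byzantine coalition $t \in l$ and every $i \in \idset \setminus t$. Taking $i = \mbox{TO}$, the multi-sets of produce, consume, and \mreport-delivery events that TO observes from members of $t$ are identical to those that would occur if those members actually followed the algorithm. My strategy is to transport each of the seven \nbart properties from the proof of Theorem~\ref{theorem:correctness} using this observational equivalence, without re-executing the internal logic of the algorithm inside each coalition.

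First I handle the four consumption-flavoured properties. By Corollary~\ref{corollary:mustproduce} each $p \in t_\prods$ invokes \emph{produce}$(p,v)$, and since this event is observed by TO and under $\vsigma_\idset$ every non-Byzantine producer invokes \emph{produce} with the correct value $v_0$, compliance forces \emph{produce}$(p,v_0)$ under $\vsigma_\idset^*$ as well. Symmetrically, Lemma~\ref{lemma:consume} fixes the multiplicity of \emph{consume}$(c,v_0)$ at TO under $\vsigma_\idset$ to exactly one, and compliance preserves multiplicities, so every non-Byzantine consumer invokes \emph{consume} exactly once with $v_0$. This immediately yields Validity, Integrity, Agreement, and Eventual Consumption.

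For the three certification-flavoured properties I appeal to Proposition~\ref{prop:evidence}. Compliance forces the multi-set of \mreport{} deliveries at TO originating from non-Byzantine consumers to coincide with that of $\vsigma_\idset$, while Byzantine contributions are filtered out by $\epsilon$; hence $\epsilon(\vsigma_\idset^*) = \epsilon(\vsigma_\idset)$. By Lemma~\ref{lemma:certsets}, under $\vsigma_\idset$ there exist witnesses $\bar{\prods}, \bar{\cons}$ of the required cardinalities for \emph{hasProd} and \emph{hasAck}, and the same sets continue to witness these predicates under $\vsigma_\idset^*$. The Evidence property follows because TO invokes \emph{certify} on every \mreport{} delivery and compliance keeps those deliveries arriving.

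The main obstacle I anticipate is rigorously accounting for what compliance pins down when a consumer sits inside a Rational coalition: such a consumer need not execute the block-decoding pipeline of Algorithms~\ref{alg:consumer1}--\ref{alg:consumer2}, and the coalition may instead assemble the signed \mreport{} internally. I would need to verify that the standing bounds $\nprods \geq \max(\nbyzp,\nconidsp) + \nbyzp + 1$ and $\ncons \geq \nbyzc + \nconidsc + 1$ guarantee that the producer signatures demanded by $\phi_{\mbox{TO}}(\vsigma_t)$ are indeed obtainable---either from outside the coalition via compliant messages or from colluding producers within it---so that $\Sigma_t(\vsigma_\idset)$ is non-trivially realisable and the observational-equivalence argument is not vacuous.
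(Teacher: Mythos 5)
Your proposal is correct and follows essentially the same route as the paper's proof: observational equivalence at TO (via the compliance condition) transports Validity, Integrity, Agreement, and Eventual Consumption from Theorem~\ref{theorem:correctness} together with Corollary~\ref{corollary:mustproduce}, and Proposition~\ref{prop:evidence} handles the Evidence and Certification properties exactly as you describe. The realisability worry you flag at the end is not needed for the theorem as stated (the hypothesis already posits that each coalition follows some $\vsigma_t^* \in \Sigma_t(\vsigma_\idset)$); the paper nonetheless addresses it inside the Eventual Consumption case by a short case analysis on whether $\#(\mprodset_c \setminus t_\prods) \geq \nbyzp + \nblocks$ or some producer of $t$ itself produces the value.
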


\begin{proof}
Notice that, in this scenario, it is also true that: 1) $\nprods \geq 2\nbyzp + 1$ and 2) $\ncons \geq \nbyzc+1$. Let us fix some arbitrary $t \in \collset_\cons \cap l$ and $c \in t_\cons$.
The correctness is proved for each of the \nbart properties:
\begin{itemize}

 \item (\emph{Validity}): \emph{consume}($c$,$v$)~$\in$~$\phi_{\mbox{TO}}(\vsigma_t^*)$,
where $v$ must be a value for which there are $\nbyzp+1$ signatures of $\vec{h}_v$, otherwise $\phi_{\mbox{TO}}(\vsigma_\idset^*) \neq \phi_{\mbox{TO}}(\vsigma_\idset)$ 
and $\vsigma_t^* \notin \Sigma_t(\vsigma_\idset)$. By 1) and Corollary~\ref{corollary:mustproduce}, there is
only one value that fulfils these restrictions, which is the value produced by all non-Byzantine producers.

 \item (\emph{Integrity}): 
 Since the players of $t$ follow $\vsigma_t^*$ and $\vsigma_t^* \in \Sigma_t(\vsigma_\idset)$,  $\#(\mbox{\emph{consume}}(c,v),\phi_{\mbox{TO}}(\vsigma_t^*)) = 1$.
 
 \item (\emph{Agreement}):  It follows directly from \emph{Validity} and Corollary~\ref{corollary:mustproduce}.
 
 \item (\emph{Eventual Consumption}): Since $\phi_c(\vsigma_\idset^*)=\phi_c(\vsigma_\idset)$, $t$ receives
blocks from all the non-Byzantine producers from $\mprodset_c \setminus t_\prods$. 
If $\#(\mprodset_c \setminus t_\prods) \geq \nbyzp + \nblocks$,
then $c$ eventually gathers $\nblocks$ blocks corresponding to the correct value,
otherwise, $\#t_\prods \geq 1$ and by Corollary~\ref{corollary:mustproduce} some producer of 
$t$ produces the value. In either case, by the definition of $\Sigma_t(\vsigma_\idset)$, 
$c$ must invoke \emph{consume}($c$,$v$).

 \item (\emph{Evidence}): It follows from the fact that TO is Altruistic.
 
 \item (\emph{Producer and Consumer Certification}): By the definition
 of $\Sigma_t(\vsigma_t^*)$, $\phi_{\mbox{TO}}(\vsigma_\idset^*) = \phi_{\mbox{TO}}(\vsigma_\idset)$.
By Theorem~\ref{theorem:correctness} and by 1) and 2), if all players follow $\vsigma_\idset$,
then the properties \nbart 6-7 are fulfilled for $e=\epsilon(\vsigma_\idset)$. It follows from Proposition~\ref{prop:evidence} 
that $\epsilon(\vsigma_\idset^*) = e$. Since the value of the predicates only depends on $e$,
then these properties also hold in this new scenario.

\end{itemize}
\end{proof}

We now provide the proofs that the profile of strategies $\vsigma_\idset$
where all players follow the algorithm is $(\nconidsp + \nconidsc,\nconidsp,\nconidsc)$-cotolerant
for $\nprods \geq \max(\nbyzp,\nconidsp) + \nbyzp +1$ and
$\ncons \geq \nbyzc + \nconidsc + 1$. The following two lemmas show
that, for each $t \in \collset$ the expected benefit is 0 for all $i \in t$, whenever
players of $t$ follow a profile of strategies from $\Sigma_t \setminus \Sigma_t(\vsigma_\idset)$. Recall that we assume that the players are risk averse. Therefore, the analysis is done assuming worst case Byzantine behaviour.
 
\begin{lemma}
\label{lemma:certallplayers}
For any $t \in \collset_\cons$,
let $\vsigma_t' \in \Sigma_t \setminus \Sigma_t(\vsigma_\idset)$
be any profile of strategies where $t$ does not ensure that for all
$c \in t_\cons$ and $p \in \prods$, $c$ certifies $p$ and invokes \emph{consume}$(c,v)$, 
and, for each, $p \in t_\prods$ $p$ invokes \emph{produce}$(p,v)$. 
Then, for all $i \in t$, $\expbenef_i(\vsigma_t',\vsigma_{\idset \setminus t}) = 0$.
\end{lemma}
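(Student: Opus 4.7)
Since benefits are non-negative, it suffices to exhibit a single Byzantine strategy under which every $i\in t$ obtains $\beta_i=0$; by definition $\expbenef_i=0$ then follows. The plan is to have every Byzantine player stay completely silent. Concretely, I pick the maximal adversarial coalition outside $t$---possible because $\nprods\geq\max(\nbyzp,\nconidsp)+\nbyzp+1$ and $\ncons\geq\nbyzc+\nconidsc+1$ leave room for $\nbyzp$ Byzantine producers and $\nbyzc$ Byzantine consumers disjoint from $t$---and have each of them send no \mblock, \mhash, or \mreport. Then no consumer ever acquires a valid signature from a Byzantine producer, so $\bar{\prods}\cap\byzp=\emptyset$, and TO never receives a \mreport from a Byzantine consumer, so $\bar{\cons}\cap\byzc=\emptyset$. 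The threshold clauses $\#\bar{\prods}\geq\nprods-\nbyzp$ and $\#\bar{\cons}\geq\ncons-\nbyzc$ in the definition of \emph{hasProd} and \emph{hasAck} therefore force the extremal identifications $\bar{\prods}=\prods\setminus\byzp$ and $\bar{\cons}=\cons\setminus\byzc$ as soon as either predicate is to hold.

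I then case-split on the way in which $\vsigma_t'$ fails to lie in $\Sigma_t(\vsigma_\idset)$. If some $c\in t_\cons$ never invokes \emph{consume}$(c,v)$, then $c\notin\bar{\cons}$ and the forced equality breaks. If some $p\in t_\prods$ never causes \emph{produce}$(p,v)$ to be triggered in TO, then no consumer's report certifies $p$, so $p\notin\bar{\prods}$; again the forced equality breaks. Finally, if some $c\in t_\cons$ fails to certify some non-Byzantine $p\in\prods$, then by the greatest-set definition of $\bar{\prods},\bar{\cons}$ one must drop either $c$ from $\bar{\cons}$ or $p$ from $\bar{\prods}$, each of which takes the corresponding size strictly below its threshold. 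In every case the predicates \emph{hasProd} and \emph{hasAck} become false, so $\beta_i=0$ for every $i\in t$ under the chosen Byzantine strategy, and the lemma follows.

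The delicate step is the last case: I need to rule out that the missing certification of $p$ by $c$ be repaired through intra-coalition manoeuvres. This rests on the unforgeability of the point-to-point authenticated channels---no other coalition member can fabricate $s_p(\cdot\|\vec{h}_v)$ without $p$'s key---together with the Byzantine silence fixed at the start, which removes the only alternative route by which a certification defect could be absorbed without shrinking one of the extremal sets. A small interpretive subtlety is that the hypothesis ``$c$ does not certify $p$ for some $p\in\prods$'' must be read as referring to a non-Byzantine $p$, since a Byzantine silent $p$ is already uncertifiable even under $\vsigma_\idset$ and therefore cannot constitute a deviation.
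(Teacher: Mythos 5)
Your proof is correct and follows essentially the same route as the paper's: assume worst-case Byzantine behaviour (which you concretise as silence by a maximal Byzantine set disjoint from $t$), observe that any of the hypothesised failures forces $\#\bar{\prods} < \nprods - \nbyzp$ or $\#\bar{\cons} < \ncons - \nbyzc$, so the threshold clause makes \emph{hasProd} and \emph{hasAck} false for every player and all benefits vanish. You are in fact somewhat more explicit than the paper --- instantiating the adversary, treating the missing \emph{produce} case, and flagging that the non-certification hypothesis must concern a non-Byzantine $p$ --- but the underlying counting argument is the same.
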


\begin{proof}
Assume worst case Byzantine behaviour. If $n \geq 1$ producers are not certified by all consumers of $t$,
those $n$ producers are certified by less than $\ncons - \nbyzc$ consumers.
Since $\#t_\prods < \nprods - \nbyzp$, $\#\bar{\prods} \leq \nprods - n- \nbyzp < \nprods - \nbyzp$.
Conversely, if consumers from $t_\cons$ do not consume the correct
value, then it is true that $\#\bar{\cons} < \ncons - \nbyzc$,
due to the fact that $\#t_\cons < \ncons - \nbyzc$.
By the definition of the predicates, for all $p \in t_\prods$ and $c \in t_\cons$,
\emph{hasProd}(evidence,$p$) and \emph{hasAck}(evidence,$c$)
are false. Therefore, for all $i \in t$ $\expbenef_i(\vsigma_t',\vsigma_{\idset \setminus t}) = 0$.
\end{proof}

\begin{lemma}
\label{lemma:badalternative}
For any $t \in \collset$, let $\vsigma_t' \in \Sigma_t \setminus \Sigma_t(\vsigma_\idset)$.
Then, for all $i \in t$, $\expbenef_i(\vsigma_t',\vsigma_{\idset \setminus t}) = 0$.
\end{lemma}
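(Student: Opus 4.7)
My plan is to derive Lemma~\ref{lemma:badalternative} by a case analysis that reduces most situations to Lemma~\ref{lemma:certallplayers}, and handles the remaining ones with a direct worst-case Byzantine argument that drives the size of $\bar{\prods}$ or $\bar{\cons}$ below the threshold required by the predicates. The first case I would dispatch is $t \in \collset_\cons$ with $\vsigma_t'$ already matching the hypothesis of Lemma~\ref{lemma:certallplayers}---that is, some $c \in t_\cons$ fails to certify some producer or to invoke \emph{consume}$(c,v)$, or some $p \in t_\prods$ fails to invoke \emph{produce}$(p,v)$. In this case the conclusion follows immediately.

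The residual situations are: (i) $t$ is a purely producer coalition, and (ii) $t \in \collset_\cons$ but every member of $t$ already performs the certifications, consumptions, and productions prescribed by the algorithm, yet $\vsigma_t'$ still diverges from $\vsigma_t$ at some non-$t$ player. For (i), Corollary~\ref{corollary:mustproduce} tells me that any strategy in $\Sigma_t(\vsigma_\idset)$ triggers \emph{produce}$(p,v)$ at TO for every $p \in t_\prods$; non-compliance must therefore either omit such a produce event or perturb a message delivered to some non-$t$ player. If a produce event is omitted, I would fix the worst-case Byzantine strategy in which Byzantine producers remain silent and Byzantine consumers emit no valid reports. Then the non-producing $p$ has no legitimate block or signed hash vector to disseminate, no non-Byzantine consumer can certify it, and $p \notin \bar{\prods}$; combined with the bound $\nprods \geq \max(\nbyzp,\nconidsp) + \nbyzp + 1$ and a careful count of certifiable producers, this yields $\#\bar{\prods} < \nprods - \nbyzp$, and by the definition of the predicates $\expbenef_i = 0$ for every $i \in t$. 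If instead only messages to non-$t$ players differ, Proposition~\ref{prop:evidence} and the fact that non-Byzantine recipients deterministically follow the algorithm together ensure that the discrepancy cannot influence \emph{evidence} without also altering $\phi_{\mbox{TO}}$, which brings us back to the previous sub-case. Case (ii) is handled by the symmetric argument on $\bar{\cons}$, with the bound $\ncons \geq \nbyzc + \nconidsc + 1$ playing the analogous role.

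The principal obstacle I anticipate is the quantitative bookkeeping in case (i): verifying that the worst-case Byzantine strategy, together with a single non-producing member of $t_\prods$, really does pull the number of certifiable producers below $\nprods - \nbyzp$. The bound $\nprods \geq \max(\nbyzp,\nconidsp) + \nbyzp + 1$ is tuned for exactly this count, but tracking precisely which producers remain in $\bar{\prods}$ when silent Byzantine producers and non-producing coalition members overlap---and ensuring the conclusion propagates to every $i \in t$, including members whose internal behaviour still obeys the algorithm---is the step that will require the most care.
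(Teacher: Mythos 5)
Your proposal is correct to the same standard as the paper's own proof, but it takes a genuinely different route. The paper argues in one stroke: since $\vsigma_t' \notin \Sigma_t(\vsigma_\idset)$ there is a non-Byzantine witness $j \notin t$ with $\phi_j(\vsigma_t') \neq \phi_j(\vsigma_t)$, it asserts that this forces some missing certification, consumption or production, and it hands every case (whether $j$ is a producer, a consumer, or TO) to Lemma~\ref{lemma:certallplayers}. You instead split on the shape of the coalition and of the deviation, invoking Lemma~\ref{lemma:certallplayers} only where its hypothesis literally applies, and re-running the worst-case Byzantine counting yourself (silent Byzantine producers and consumers mean the predicates can only hold if $\bar{\cons}$ contains every non-Byzantine consumer, so a single coalition producer left uncertified pushes $\#\bar{\prods}$ below $\nprods-\nbyzp$) for purely-producer coalitions. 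That extra structure is not wasted: Lemma~\ref{lemma:certallplayers} is stated only for $t \in \collset_\cons$, so the paper's blanket citation stretches it beyond its stated hypothesis, and your case (i) supplies exactly the counting that the citation leaves implicit. Where you are no sharper than the paper is the residual class of deviations that change observable behaviour without damaging the evidence: for instance a coalition member sending extra or duplicated messages, or a producer skipping \emph{produce} while a colluding producer supplies it with $\vec{h}_v$ to sign, so that every non-Byzantine consumer still certifies it. Your assertion that such a producer ``has no legitimate block or signed hash vector'', and your appeal to Proposition~\ref{prop:evidence} to fold message perturbations ``back to the previous sub-case'', do not really dispose of these strategies; but the paper's claim that divergence at $j$ implies a missing certification, consumption or production is equally unjustified, and in both write-ups the zero-benefit conclusion genuinely holds only for deviations that actually alter the evidence gathered by TO.
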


\begin{proof}
Assume worst case Byzantine behaviour. By the definition of $\Sigma_t(\vsigma_\idset)$, there exists $j \in \idset \setminus (\byz \cup t)$
such that $\phi_j(\vsigma_t') \neq \phi_j(\vsigma_t)$, which implies that not all expected events are triggered in $j$ for some player $i \in t$, some consumer 
does not consume the correct value, or some producer does not produce the correct value. If $j$ is a consumer or a producer, then it follows directly from Lemma~\ref{lemma:certallplayers}
that, for all $i \in t$, $\expbenef_i(\vsigma_t',\vsigma_{\idset \setminus t}) = 0$. If $j$ is
TO, then either 1) $i$ is a producer, and $i$ is not certified by some consumer or does not produce the value; or
2) $i$ is a consumer, and $i$ does not certify some player or does not consume the value. 
In both cases, by Lemma~\ref{lemma:certallplayers},
it is true that for all $i \in t$, $\expbenef_i(\vsigma_t',\vsigma_{\idset \setminus t}) = 0$.
\end{proof}

The following theorem concludes that the proposed algorithm is $(\nconidsp+\nconidsc,\nconidsp,\nconidsc)$-cotolerant.

\begin{theorem}
\label{theorem:cotolerant}
Let $\vsigma_\idset \in \Sigma_\idset$ denote the profile of strategies
where all players follow the algorithm. Then, $\vsigma_\idset$
is $(\nconidsp+\nconidsc,\nconidsp,\nconidsc)$-cotolerant.
\end{theorem}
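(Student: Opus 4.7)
The plan is to fix an arbitrary coalition $t \in \collset$ with $\#t \leq \nconidsp + \nconidsc$, $\#t_\prods \leq \nconidsp$, and $\#t_\cons \leq \nconidsc$, together with any $\vsigma_t^* \in \Sigma_t(\vsigma_\idset)$ satisfying $(\vsigma_t^*,\vsigma_{\idset \setminus t}) \succeq_t \vsigma_\idset$ and any $\vsigma_t' \in \Sigma_t \setminus \Sigma_t(\vsigma_\idset)$, and to derive $(\vsigma_t^*,\vsigma_{\idset \setminus t}) \succ_t (\vsigma_t',\vsigma_{\idset \setminus t})$. In the Rational case, which is the substantive one, the strict version of the coalition preference condition reduces this to showing $\exputil_i(\vsigma_t^*,\vsigma_{\idset \setminus t}) > \exputil_i(\vsigma_t',\vsigma_{\idset \setminus t})$ for every $i \in t$; for Altruistic singletons the preference holds by the definition of Altruistic relations, while Byzantine players are quantified over inside $\exputil$ rather than appearing as one of the non-Byzantine coalitions constrained by $\nconidsp, \nconidsc$.

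For the upper branch I would invoke Theorem~\ref{theorem:goodalternative} applied to the partition consisting of $t$ itself together with a singleton coalition for every $j \in \idset \setminus (t \cup \byz)$, each trivially compliant because it literally follows $\vsigma_\idset$. For any choice of Byzantine strategy the theorem then guarantees that the \nbart properties hold, in particular \nbart 6 and \nbart 7, so \emph{hasProd}(evidence,$p$) and \emph{hasAck}(evidence,$c$) become true for every $p \in t_\prods$ and $c \in t_\cons$. Therefore $\expbenef_i(\vsigma_t^*,\vsigma_{\idset \setminus t}) = \beta_i$ uniformly across Byzantine scenarios, giving $\exputil_i(\vsigma_t^*,\vsigma_{\idset \setminus t}) = \beta_i - \expcost_i(\vsigma_t^*,\vsigma_{\idset \setminus t})$. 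For the lower branch, Lemma~\ref{lemma:badalternative} directly supplies $\expbenef_i(\vsigma_t',\vsigma_{\idset \setminus t}) = 0$, whence $\exputil_i(\vsigma_t',\vsigma_{\idset \setminus t}) = -\expcost_i(\vsigma_t',\vsigma_{\idset \setminus t}) \leq 0$.

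To close, I would combine the two branches with the hypothesis $\beta_i > \alpha_i(\vsigma_\idset)$: because $\vsigma_t^* \in \Sigma_t(\vsigma_\idset)$ every externally visible message---and hence the associated communication cost---coincides with what is incurred under the algorithm, so the benefit-exceeds-cost inequality carries over to yield $\exputil_i(\vsigma_t^*,\vsigma_{\idset \setminus t}) > 0 \geq \exputil_i(\vsigma_t',\vsigma_{\idset \setminus t})$ for every $i \in t$. The strict version of the Rational coalition condition then delivers $(\vsigma_t^*,\vsigma_{\idset \setminus t}) \succ_t (\vsigma_t',\vsigma_{\idset \setminus t})$, which is exactly the clause of $(\nconidsp+\nconidsc,\nconidsp,\nconidsc)$-cotolerance. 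The main obstacle is this last step: the hypothesis on $\beta_i$ is stated only for $\vsigma_\idset$, and one needs to transfer it to an arbitrary compliant deviation $\vsigma_t^*$. Observability compliance limits any difference to private computation internal to $t$, which a Rational coalition would never pay for if it eliminated the strict profitability that $(\vsigma_t^*,\vsigma_{\idset \setminus t}) \succeq_t \vsigma_\idset$ already provides; making this last piece of bookkeeping precise is the only subtle point in what is otherwise a direct composition of Theorem~\ref{theorem:goodalternative} and Lemma~\ref{lemma:badalternative}.
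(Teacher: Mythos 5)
Your overall architecture (Theorem~\ref{theorem:goodalternative} for the compliant branch, Lemma~\ref{lemma:badalternative} for the non-compliant branch) matches the paper's, but your final step takes a different route that leaves a genuine gap. You try to obtain $(\vsigma_t^*,\vsigma_{\idset \setminus t}) \succ_t (\vsigma_t',\vsigma_{\idset \setminus t})$ by proving the utility inequality $\exputil_i(\vsigma_t^*,\vsigma_{\idset \setminus t}) > 0$ directly for an \emph{arbitrary} compliant $\vsigma_t^*$. As you yourself note, this requires transferring the assumption $\beta_i > \alpha_i(\vsigma_\idset)$ to the cost $\expcost_i(\vsigma_t^*,\vsigma_{\idset\setminus t})$, and nothing in the model gives you that: compliance only pins down the messages delivered to players outside $t$ and the produce/consume events, so $\vsigma_t^*$ may incur arbitrarily large intra-coalition communication or computation costs. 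Your proposed patch --- that a Rational coalition ``would never pay'' for such costs --- cannot be extracted from the axioms either: from $(\vsigma_t^*,\vsigma_{\idset \setminus t}) \succeq_t \vsigma_\idset$ and the contrapositive of the Rational-coalition condition you only get that \emph{some} $i \in t$ satisfies $\exputil_i(\vsigma_t^*,\vsigma_{\idset\setminus t}) \geq \exputil_i(\vsigma_\idset)$, not all of them, so you cannot conclude positivity of $\exputil_i(\vsigma_t^*,\vsigma_{\idset \setminus t})$ for every $i \in t$.

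The paper sidesteps this entirely by never estimating the utility of $\vsigma_t^*$. It compares $\vsigma_t'$ against $\vsigma_\idset$ itself: Theorem~\ref{theorem:goodalternative} gives $\expbenef_i(\vsigma_\idset) = \beta_i$, hence $\exputil_i(\vsigma_\idset) > 0$ by the assumption $\beta_i > \alpha_i(\vsigma_\idset)$ (which is stated precisely for the profile $\vsigma_\idset$), while Lemma~\ref{lemma:badalternative} gives $\exputil_i(\vsigma_t',\vsigma_{\idset\setminus t}) \leq 0$; the Rational-coalition condition then yields $\vsigma_\idset \succ_t (\vsigma_t',\vsigma_{\idset \setminus t})$, and chaining this with the hypothesis $(\vsigma_t^*,\vsigma_{\idset \setminus t}) \succeq_t \vsigma_\idset$ via transitivity of $\succeq_t$ delivers $(\vsigma_t^*,\vsigma_{\idset \setminus t}) \succ_t (\vsigma_t',\vsigma_{\idset \setminus t})$. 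Restructuring your last paragraph along these lines closes the gap; the rest of your argument (including your explicit treatment of Altruistic and Byzantine coalitions, which the paper leaves implicit) is fine.
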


\begin{proof}
Let $t \in \collset$ be any coalition such that $\#t_\prods \leq \nconidsp$
and $\#t_\cons \leq \nconidsc$. By Theorem~\ref{theorem:goodalternative},
for all $p \in t_\prods$, $\expbenef_p(\vsigma_\idset) = \pbenef$
and for all $c \in t_\cons$, $\expbenef_c(\vsigma_\idset) = \cbenef$.
Therefore, for all $i \in t$, $\expbenef_i(\vsigma_\idset) > \expcost_i(\vsigma_\idset)$ and $\exputil_i(\vsigma_{\idset}) > 0$.
Furthermore, it follows from Lemma~\ref{lemma:badalternative} 
that for all $\vsigma_t' \in \Sigma_t \setminus \Sigma_t(\vsigma_\idset)$,  $\expbenef_i(\vsigma_t',\vsigma_{\idset \setminus t}) = 0$.
Therefore, $\exputil_i(\vsigma_t',\vsigma_{\idset \setminus t}) \leq 0 < \exputil_i(\vsigma_\idset)$,
which implies that $\vsigma_\idset \succ_t (\vsigma_t',\vsigma_{\idset \setminus t})$.
Consequently, for all $\vsigma_t^* \in \Sigma_t(\vsigma_\idset)$, if $(\vsigma_t^*,\vsigma_{\idset \setminus t}) \succeq_t \vsigma_\idset$,
then $(\vsigma_t^*,\vsigma_{\idset \setminus t}) \succ_t (\vsigma_t',\vsigma_{\idset \setminus t})$.
This allows us to conclude that $\vsigma_\idset$ is $(\nconidsp+\nconidsc,\nconidsp,\nconidsc)$-cotolerant.
\end{proof}

\subsection{Discussion}
\label{sec:discussion}
Some important consequences result from Theorems~\ref{theorem:goodalternative} 
and~\ref{theorem:cotolerant}. One is that $\vsigma_\idset$ is $(1,1,1)$-cotolerant. By the definition of $\succeq_t$
for any $t \in \collset$ such that $\#t = 1$ and by the fact that $\Sigma_t(\vsigma_\idset)=\{\vsigma_\idset\}$, 
$\vsigma_\idset$ is a \emph{Nash equilibrium}.

Another important result is that no producer $p \in t_\prods$ from any non-Byzantine coalition $t$ 
can avoid sending the expected $\mhash$ and $\mblock$ messages to consumers not from $t_\cons$.
The same applies to $\mreport$ messages sent by consumers to TO.
Therefore, for any $i \in t$, the expected utility of delaying messages to players not from $t$ is at most as high
as the utility of following the algorithm. Therefore, by the promptness principle, players never
delay messages between different coalitions. Concerning the messages exchanged between players from the same
coalition, we do not guarantee that players do not incur any communication delays. Though,
if these messages are mandatory to ensure that all players of the coalition are rewarded,
then, if there is any delay, it must be finite, otherwise, the expected utility is the same
as not sending these messages, i.e., at most 0.

\section*{Acknowledgements}
This work was partially supported by the FCT (INESC-ID multi annual
funding through the PIDDAC Program fund grant and by the project
PTDC/EIA-EIA/102212/2008).

\bibliographystyle{splncs}
\bibliography{bibfile}

\end{document}